\newtheorem{Theorem}{Theorem}
\newtheorem{Proposition}{Proposition}
\newtheorem{Lemma}{Lemma}
\newtheorem{Corollary}{Corollary}
\theoremstyle{definition}
\newtheorem{Definition}{Definition}
\newtheorem{Assumption}{Assumption}
\theoremstyle{remark}
\newtheorem{Remark}{Remark}
\title{Price formation and optimal trading in intraday electricity markets with a major player}
\author{Olivier F{\'e}ron\footnote{Electricit\'e de France} \and Peter Tankov\footnote{CREST, ENSAE, Institut Polytechnique de Paris} \and Laura Tinsi$^{*,\dag}$}
\date{}
\begin{document}

\maketitle
%%%%%%%%%%%%%%%%%%%%%%%%%%%%%%%%%%%%%%%%%%

\begin{abstract}We study price formation in intraday electricity markets in the presence of intermittent renewable generation. We consider the setting where a major producer may interact strategically with a large number of small producers.  Using stochastic control theory we identify the optimal strategies of agents with market impact and exhibit the Nash equilibrium in closed  form in the asymptotic framework of mean field games with a major player.
  This is a companion paper to \cite{feron2020price}, where a similar model is developed in the setting of identical agents.
\end{abstract}  

Key words: Intraday electricity market; renewable energy; mean field games; major player  % List 

%%%%%%%%%%%%%%%%%%%%%%%%%%%%%%%%%%%%%%%%%%

\section{Introduction}

The structure of electricity markets around the world has been profoundly transformed by the push towards liberalization in the late 90s and more recently by the massive arrival of renewable energy production. Distribution has been separated from production, and whereas in the past, a single producer could own the entire generation capacity of a given country or region, now a patchwork of small, often renewable, generators competes with a big historical producer.

The aim of this paper is to develop an equilibrium model for intraday electricity markets where a big producer with a significant market share competes with a large number of small renewable producers. Both the large producer and the small producers use the intraday markets to compensate their production and demand forecast errors, creating feedback effects on the market price. The large producer can act strategically, anticipating the impact of its decisions on the market prices and thus on the behavior of the small agents. The small agents are not strategic, and each one has a negligible effect on the market, however the behavior of all small agents taken together has a significant market impact. The large player has the first-mover advantage but does not observe the forecast of the minor players. These in turn have the information advantage since they observe the forecast of the major player as well as their own forecast. This leads to a stochastic leader-follower game where players interact through  the market price. We place ourselves in the linear-quadratic setting, exhibit the unique Nash equilibrium for this game in closed form in the framework of mean field games with a major player, and provide explicit formulas for the market price and the strategies of the agents. For a game with a finite number of players, we show how an $\varepsilon$-Nash equilibrium can be constructed from the mean field game solution.

This paper is a companion paper to \cite{feron2020price}, where a similar model is developed for the case of identical agents with symmetric interactions, and we refer the readers to that paper for a detailed review of literature on the stochastic and econometric modeling of intraday electricity markets. Here we simply mention that a similar linear-quadratic setting with linear market impact, has been used to determine optimal strategies for a single energy producer by 
A\"{\i}d, Gruet and Pham \cite{aid2016optimal} and Tan and Tankov \cite{tan2018optimal},  while Bouchard et al.~\cite{bouchard2018equilibrium} found an analytic expression for the equilibrium price in a linear-quadratic model of the stock market with symmetric interactions and perfect information. {We  also mention the recent paper of Aïd, Cosso and Pham \cite{aid2020equilibrium} where an equilibrium in complete information setting for a finite number of agents is derived in the intraday electricity market. This paper is close in spirit to the complete information framework of \cite{feron2020price}, but allows to treat the case of heterogeneous agents in conditions of uncertain production with possible outages and uncertain demand.} However, the complete information setting, where each agent observes all other agents' forecasts does not seem realistic in electricity markets.  The incomplete information setting, where each agent only observes its own forecast and the aggregate forecast, may  not be tractable for a finite number of agents.  Nevertheless, in \cite{feron2020price}, it has been shown that explicit solutions may be found in the mean field limit, where the number of agents is sent to infinity, and the influence of every single agent on the entire market becomes negligible.

The mean field games (MFG) are stochastic differential games with infinitely many players and symmetric interactions. The seminal papers of Lasry and Lions \cite{lasry2007mean}  and Huang, Caines, and Malhamé \cite{huang2006large} characterized the Nash equilibrium in this framework through a coupled system of a Hamilton-Jacobi-Bellman (HJB) and a Fokker-Planck (FP) equation.  Carmona and Delarue \cite{carmona2018probabilistic} developed an alternative probabilistic approach inspired by the Pontryagin principle and related the mean field game solution to a McKean-Vlasov Forward Backward Stochastic Differential Equation (FBSDE). 
The asymptotic results obtained in mean field games can be used to construct approximate equilibria ($\varepsilon$-Nash equilibria) for games with a finite number of players. Alternatively, equilibria of $N$-player games can be shown to converge to the corresponding weak mean field equilibria \cite{lacker2020}.

{While the original MFG setting involves symmetric agents, Huang \cite{huang2010large} introduced linear-quadratic mean field games with a major player. Caines and Nourian \cite{nourian2013} developed this approach in a general framework. In both papers, the mean field is exogenous to the actions of the major player. In contrast to these two papers, Bensoussan, Chau, and Yam in \cite{bensoussan2016mean},  and Carmona and Zhu in \cite{carmona2016probabilistic}, considered the endogenous case where the major player can influence the mean field. In \cite{bensoussan2016mean}, this leads to a leader-follower setting, also known as Stackelberg game. The authors derived a HJB equation and a FP equation to characterize the solution in the general case, while the linear quadratic setting was tackled with a stochastic maximum principle approach.  
More recently, Lasry and Lions \cite{lasry2018mean}, introduced a master equation accounting for this kind of major player model. Cardaliaguet, Cirant and Poretta \cite{cardaliaguet2020remarks} showed that the two previous approaches (\cite{lasry2018mean} and \cite{carmona2016probabilistic}) lead to the same Nash equilibria.} 

Financial markets and energy systems with many small interacting agents are a natural domain of applications of MFG. Casgrain and Jaimungal \cite{casgrain2020mean} applied the MFG theory  to optimal trade execution with price impact and terminal inventory liquidation condition, Fujii and Takahashi \cite{fujii2020mean}, 
used this theory to find an equilibrium price under market clearing conditions. In  \cite{casgrain2020mean, fujii2020mean}, the authors used the extended mean field setting to deal with  heterogeneous sub-populations of agents and incomplete information for \cite{casgrain2020mean}.  Alasseur, Tahar and Matoussi \cite{alasseur2017extended} developed a model for the optimal management of energy storage and distribution in a smart grid system through an extended MFG. Shrivats, Firoozi and Jaimungal \cite{shrivats2020mean} recently applied the theoretical setting developed in \cite{casgrain2020mean} to the case of trading in solar renewable energy certificate markets. Financial markets with a major player, leader-follower interactions and terminal inventory constraint were recently analyzed in \cite{fu2018mean,evangelista2020finite}. In \cite{fu2018mean}, the authors consider a Brownian filtration, impose a zero terminal inventory constraint and characterize the equilibrium in terms of a McKean-Vlasov FBSDE. In \cite{evangelista2020finite}, the authors study a market with a finite number of small players and a major player with first-mover advantage and information asymmetry, and characterize the solution in terms of a McKean-Vlasov FBSDE in a more general setting than that of \cite{fu2018mean}.

Among the cited papers, our methods and findings are closest in spirit to \cite{bensoussan2016mean,evangelista2020finite,fu2018mean}. 
Compared to the article \cite{bensoussan2016mean}, which of course solves a more general problem, without focusing on a specific application, our paper allows a much more general dynamics for the driving processes (general semimartingales) and does not require an a priori bound on the strategies to prove the existence of the Nash equilibrium in the presence of a major player.  Unlike the articles \cite{evangelista2020finite,fu2018mean}, which also study leader-follower games in financial markets, we consider a stochastic terminal constraint, characterize the equilibrium in explicit form, and show how an $\varepsilon$-Nash equilibrium for the finite-player game may be constructed from a mean field game solution.

The paper is organized as follows.  In section 2 we introduce the model and briefly recall the mean field game solution obtained in \cite{feron2020price} in the case of identical agents. In section 3, we present the main results of this paper in the setting allowing for the presence of a major player, whose influence on the market is not negligible in the MFG limit. In section 4, we show how the limiting MFG solution may be used to construct an approximate Nash equilibrium in a Stackelberg game with one major player and $N$ minor players. 
Finally, in section 5, equilibrium price trajectories, and the effect of market parameters on the price characteristics are illustrated with simulated data.

\section{Preliminaries}\label{homogeneous}
In this paper, we place ourselves in the intraday market for a given delivery hour starting at time $T$, where time $0$ corresponds to the opening time of the market (in EPEX Intraday this happens at 3PM on the previous day).   In reality, trading stops a few minutes before delivery time (e.g. 5 minutes for Germany). However, for the sake of simplicity we assume that market participants can trade during the entire period $[0, T]$. In the market, there are agents (producers or consumers) who are assumed to have taken a position in the day-ahead market and use the intraday market to manage the volume risk associated to the imperfect demand/production forecast. These forecasts represent the best estimate of the additional demand compared to the position taken by the agent in the spot market: to avoid imbalance penalties, the intraday position of the agent at the delivery date must therefore be equal to the realized demand, or, in other words, the last observed value of the demand forecast.

We consider the case of a Stackelberg game where an agent called "major agent" faces a large number of smaller agents called "minor agents".  We directly place ourselves in the setting of mean field games with a major player, that is, we assume that the number of small agents in the market is infinite, and the influence of each small agent on the market is negligible. The aggregate impact of the minor agents on the market is therefore modelled through a mean field.

Each agent observes the common national demand forecast, and the demand forecast of the major player. In addition, the small agents
also observe their individual demand forecasts, which are not observed by the other agents.
The common filtration of the market thus contains the information about the forecast of the major player and the common part of the forecasts of the minor players, but the small agents benefit from a private information advantage compared to the major player.

The demand forecast process and the position of the generic minor agent are given, respectively, by $X :=(X_t)_{0\leq t\leq T}$ and $\phi:=(\phi_t)_{0\leq t\leq T}$, while the forecast process and the position of the major agent are given, respectively, by $(X^0_t)_{t \in [0,T]}$ and $(\phi^0_t)_{t \in [0,T]}$. Note that the position and forecast of the minor and major agents are not expressed in the same units. Indeed, in the mean field game limit considered in this paper, we assume that the market is very large, so that the position of every minor agent compared to the market size is negligible, but the major agent takes up a nonzero share of the market, so that $\phi
^0$ and $X^0$ denote the position and forecast of the major agent normalized by the market size.

We denote by $\mathbb F$ the filtration which contains all information available to the generic minor agent and by $\mathbb F^0$ the filtration which contains all information available to the major agent. This filtration contains the information about the fundamental price, the information about the demand forecast of the major agent, and potentially some information about the demand forecast of the generic minor agent (the common noise) but, in general, not the full individual demand forecast of the generic agent.

Throughout the paper and for any $\mathbb{F}$-adapted process $(\zeta_t)_{t \in [0,T]}$, we will denote $\bar \zeta_t = \mathbb{E}[\zeta_t |\mathcal{F}^0_t] = \int_{\mathbb{R}}x \mu_t^{\zeta}(dx)$ where:   $\mu^{\zeta}_t:= \mathcal{L}(\zeta_t|\mathcal{F}^0_t)$. In view of the convergence results of \cite{feron2020price}, the (normalized) aggregate position of all minor agents is given by the expectation of $\phi$ with respect to the common noise: $\bar \phi_t = \mathbb E[X_t|\mathcal F
^0_t]$.  

We assume that the market price $(P_t)_{t \in [0,T]}$ is given by the fundamental price $(S_t)_{t \in [0,T]}$ plus a weighted combination of the aggregate position of the minor agents and the position of the major agent:   $$P_t = S_t + {a}\bar \phi_t+ {a}^0\phi^0_t, \quad \forall t \in [0,T]$$
{where ${a}$, ${a}^0$ are positive weights, which reflect the size of the major agent relative to the combined size of all minor agents and the overall strength of the market impact.}
Thus, the impact of each minor agent on the entire market is negligible, but the aggregate position of all minor agents, and the position of the major agent both have a nonzero impact.

 We say that the strategy of the generic minor agent $(\dot\phi_t)_{t \in [0,T]}$ is admissible if it is $\mathbb F$-adapted and square integrable. Similarly, the strategy of the major agent $(\dot\phi^0_t)_{t \in [0,T]}$ is admissible if it is $\mathbb F^0$-adapted and square integrable. The instantaneous cost of trading for the major agent and for the generic minor agent are defined, respectively, by:
\begin{equation}
    \dot \phi_t^0 P_t + \frac{\alpha_0(t)}{2} (\dot\phi^0_t)^2,\quad \text{and}\quad \dot \phi_t P_t + \frac{\alpha(t)}{2} (\dot\phi_t)^2, \quad  \forall t \in [0,T]
\end{equation}
In both instantaneous costs, the first term represents the actual cost of buying the electricity,
and the second term represents the cost of trading, where $\alpha(.)$ and  $\alpha_0(.)$ are continuous strictly positive functions on [0, T] reflecting the variation of market liquidity at the approach of
the delivery date. 

The objective function of the minor agent has the following form:\label{game}
\begin{equation}\label{minorobj}
    J^{MF}(\phi,\bar \phi, \phi^0) :=  -\mathbb E \left[\int_0^T  \frac{\alpha(t)}{2}\dot \phi^2_t + (S_t + {a}\bar \phi_t+ {a}^0\phi^0_t)\dot \phi_t dt + \frac{\lambda}{2} (\phi_T-X_T)^2\right],
  \end{equation}
while the objective function of the major agent writes,
\begin{equation}\label{majorobj}
     \quad J^{MF,0}(\phi^0,\bar \phi) :=    - \mathbb E \left[\int_0^T  \frac{\alpha_0(t)}{2}\dot {\phi^0_t}^2 +(S_t+ {a}\overline \phi_t + {a}^0\phi^0_t)\dot \phi^0_t dt + \frac{\lambda_0}{2} (\phi^0_T-X^0_T)^2\right].
\end{equation}
  
{Note that this formulation implies (as it is the case in real markets) that the major agent pays a much lower trading cost per unit traded and a much lower imbalance penalty than the minor agents.  Indeed, if the major agent paid the same quadratic cost/penalty as the minor agents, since the position of the major agent is very large, the quadratic trading cost/penalty would grow much faster than the linear part (the middle term in the formula), and the limiting formula would be degenerate, in the sense that the trading strategy would be independent from the price. To obtain a nondegenerate expression in terms of the normalized trading strategy of the major agent, we must therefore assume that the actual trading cost and penalty are also renormalized. The quantities $\alpha_0(t)$ and $\lambda_0$ are thus different from $\alpha(t)$ and $\lambda$ since they are of different nature: $\alpha(t)$ and $\lambda$ apply to the actual strategy of the generic agent, while $\alpha_0(t)$ and $\lambda_0$ apply to the normalized strategy of the major agent. The different nature of trading costs for minor and major agents is confirmed by other authors \cite{donier2015fully}: while the minor agents post their orders immediately in the order book, the major agent splits its orders into many small chunks to minimize trading costs. 
}

To close this introductory section, we briefly recall the main result from \cite{feron2020price}, which characterizes the mean field equilibrium in the setting of identical agents, in other words, we assume that $a_0 = 0$ until the end of this section.  
\begin{Definition}[mean field equilibrium]\label{defMFE} 
An admissible strategy  $\dot \phi^* := (\dot \phi^*_t)_{t \in [0,T]}$ is a mean field equilibrium in the setting of identical agents if it maximizes the functional \eqref{minorobj} with $a_0 = 0$ and satisfies $\bar \phi = \bar \phi^*$.
\end{Definition}

We make the following assumption. 
\begin{Assumption}\label{mfgass}${}$
\begin{itemize}
\item The process $S$ is square integrable and adapted to the filtration $\mathbb F^0$. 
\item The process $X$ is a square integrable martingale with respect to the filtration $\mathbb F$. 
\item The process $\overline X$ defined by $\overline X_t := \mathbb E[X_t|\mathcal F^ 0_t]$ for $0\leq t\leq T$ is a square integrable martingale with respect to the filtration $\mathbb F$.
\end{itemize}
\end{Assumption}
Note that if $X$ is an $\mathbb F$-martingale, then $\overline X$ is by construction an $\mathbb F^0$-martingale, but it may not necessarily be a martingale in the larger filtration $\mathbb F$.

The following theorem characterizes the mean field equilibrium in the identical agent setting.
In the theorem, we decompose the individual demand forecast as follows: $X_t = \overline{X}_t + \check{X}_t$, where $\overline{X}_t = \mathbb{E}\left[X_t |\mathcal{F}_t^0 \right]$, and we use the following shorthand notation:
\begin{align}
  \Delta_{s,t}&:= \int_s^t \frac{\eta(u,t)}{\alpha(u)} du\quad \text{with}\quad \eta(s,t) = e^{-\int_s^t \frac{a}{\alpha(u)}du}\quad \text{and}\quad \widetilde \Delta_{s,t} :=\int_s^t \alpha^{-1}(u) du\notag\\
   I_t&:= \int_0^t \frac{\eta(s,t)}{\alpha(s) } S_s ds, \quad \widetilde I_t := \mathbb E\left[\int_0^T\frac{\eta(s,T)}{\alpha(s) } S_s ds\Big|\mathcal F_t\right].
\end{align}

\begin{Theorem}\label{theorem_mfg}
Under Assumption \ref{mfgass}, the unique mean field equilibrium in the setting of identical agents is given by 
\begin{align}\label{strathomo}
\phi^{*}_t &= - I_t + \lambda\left[\Delta_{0,t}\frac{\widetilde I_0 +  \overline X_0}{1+\lambda \Delta_{0,T}}  + \int_0^t \Delta_{s,t}\frac{d\widetilde I_s+ d\overline X_s}{1+\lambda \Delta_{s,T}}\right.\\ \notag &\left.+\widetilde \Delta_{0,t}\frac{\check X_0}{1+\lambda\widetilde\Delta_{0,T}}+\int_0^t \widetilde \Delta_{s,t} \frac{ d\check X_s }{1+
  \lambda\widetilde\Delta_{s,T}} \right]. 
\end{align}
The equilibrium price has the following form:
\begin{equation}
\label{eq:price_incomplete_information}
P_t = S_t - aI_t + a\lambda\left[\Delta_{0,t}\frac{\widetilde I_0 +  \overline X_0}{1+\lambda \Delta_{0,T}}  + \int_0^t \Delta_{s,t}\frac{d\widetilde I_s+ d\overline X_s}{1+\lambda \Delta_{s,T}}\right].
\end{equation}
\end{Theorem}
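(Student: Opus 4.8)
The plan is to solve the equilibrium in two nested stages --- first the generic minor agent's optimization for a frozen mean field $\overline\phi$, then the fixed point $\overline\phi=\overline\phi^{*}$ --- exploiting throughout the linear--quadratic structure. So I would fix an $\mathbb F^0$-adapted square-integrable $\overline\phi$ and regard $P_t:=S_t+a\overline\phi_t$ as an exogenous price (recall $a_0=0$ here). As a function of the admissible control $\dot\phi\in L^2_{\mathbb F}$ with $\phi_0=0$, the functional \eqref{minorobj} is strictly concave, hence has a unique maximizer, characterized by the vanishing of the directional derivative in every admissible direction $\beta$ with $\beta_0=0$. Computing this derivative, writing $\beta_T=\int_0^T\dot\beta_t\,dt$, applying Fubini, and conditioning the resulting integrand on $\mathcal F_t$ (legitimate since $\dot\beta_t$ is $\mathcal F_t$-measurable and otherwise arbitrary) gives the first-order condition
\[
\alpha(t)\dot\phi^*_t+S_t+a\overline\phi_t+M_t=0,\qquad M_t:=\lambda\,\mathbb E\bigl[\phi^*_T-X_T\,\big|\,\mathcal F_t\bigr],
\]
with $M$ an $\mathbb F$-martingale; equivalently $\dot\phi^*_t=-\alpha(t)^{-1}(P_t+M_t)$, which is closely related to the single-producer problem of \cite{aid2016optimal,tan2018optimal} and could be invoked, but I would proceed directly.

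\textbf{Step 2 (the common component and the price).} Taking $\mathbb E[\,\cdot\mid\mathcal F^0_t]$ in the first-order condition --- using that $S,\overline\phi$ are $\mathbb F^0$-adapted and $\overline X$ is an $\mathbb F^0$-martingale --- yields the random linear ODE $\dot{\overline\phi}_t+\tfrac{a}{\alpha(t)}\overline\phi_t=-\tfrac{1}{\alpha(t)}(S_t+\overline M_t)$, $\overline\phi_0=0$, with $\overline M_t:=\mathbb E[M_t\mid\mathcal F^0_t]$, whose solution through the integrating factor $\eta(0,\cdot)^{-1}$ is $\overline\phi_t=-I_t-\int_0^t\tfrac{\eta(s,t)}{\alpha(s)}\overline M_s\,ds$. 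Substituting the value at $t=T$ into $\overline M_t=\lambda\,\mathbb E[\overline\phi_T-\overline X_T\mid\mathcal F^0_t]$, splitting the time integral at $t$, and using the $\mathbb F^0$-martingale property of $\overline M$, one gets a closed linear integral equation for $\overline M$. Differentiating it in $t$ and noting $\tfrac{d}{dt}(1+\lambda\Delta_{t,T})=-\lambda\,\eta(t,T)/\alpha(t)$ exposes the total-derivative structure
\[
d\bigl[(1+\lambda\Delta_{t,T})^{-1}Y_t\bigr]=(1+\lambda\Delta_{t,T})^{-1}\bigl(d\widetilde I_t+d\overline X_t\bigr),\qquad Y_t:=\widetilde I_t+\overline X_t+\!\int_0^t\!\tfrac{\eta(s,T)}{\alpha(s)}\overline M_s\,ds,
\]
which integrates explicitly; feeding the result back and applying Fubini yields $\overline\phi_t$, hence $P_t=S_t+a\overline\phi_t$, in the form \eqref{eq:price_incomplete_information}. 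Here one uses --- and this is where the precise filtration structure of the model matters --- that conditional expectations of $\mathcal F^0_T$-measurable random variables coincide under $\mathcal F_t$ and $\mathcal F^0_t$ (independence of the minor agent's idiosyncratic forecast noise from $\mathbb F^0$), so that $\mathbb E[\int_0^T\tfrac{\eta(s,T)}{\alpha(s)}S_s\,ds\mid\mathcal F^0_t]=\widetilde I_t$.

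\textbf{Step 3 (the idiosyncratic component) and conclusion.} Subtracting the ODE for $\overline\phi$ from that for $\phi^*$ gives $\phi^*_t=\overline\phi_t-\int_0^t\alpha(s)^{-1}\check M_s\,ds$ with $\check M_t:=M_t-\overline M_t$. Writing $X_T=\overline X_T+\check X_T$ and $\phi^*_T$ through the previous identity, the $\mathcal F^0_T$-measurable parts cancel in $\check M_t=\lambda\bigl(\mathbb E[\phi^*_T-X_T\mid\mathcal F_t]-\mathbb E[\phi^*_T-X_T\mid\mathcal F^0_t]\bigr)$, leaving $\check M_t=-\lambda\,\mathbb E[\int_0^T\alpha(s)^{-1}\check M_s\,ds+\check X_T\mid\mathcal F_t]$. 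Since $\check X$ and $\check M$ are $\mathbb F$-martingales by Assumption \ref{mfgass}, this is precisely the equation of Step 2 with $\Delta$ replaced by $\widetilde\Delta$, $\overline X$ by $\check X$, and $\eta(s,T)/\alpha(s)$ by $\alpha(s)^{-1}$, so the same integrating-factor computation produces the two remaining $\check X$-driven terms of \eqref{strathomo}. It then remains to verify admissibility of the candidate strategy --- square integrability follows from Assumption \ref{mfgass} and the boundedness of $\eta,\Delta,\widetilde\Delta$ on $[0,T]$ together with $(1+\lambda\Delta_{s,T})^{-1}\le1$ and $(1+\lambda\widetilde\Delta_{s,T})^{-1}\le1$ --- while uniqueness is automatic, because each stage was an equivalence (strict concavity makes the first-order condition necessary and sufficient, and the linear fixed-point equations have unique solutions).

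\textbf{Main obstacle.} The delicate points are not the algebra but the measurability bookkeeping: the identity $\mathbb E[\,\cdot\mid\mathcal F_t]=\mathbb E[\,\cdot\mid\mathcal F^0_t]$ on $\mathcal F^0_T$-measurable variables and the fact that $\overline M$ and $\check M$ remain $\mathbb F$-martingales (where the third bullet of Assumption \ref{mfgass} is genuinely used), together with the rigorous justification of the first-order condition --- differentiating under the expectation and passing from ``$\mathbb E[\int_0^T(\cdots)\dot\beta_t\,dt]=0$ for every admissible $\dot\beta$'' to the pointwise $dt\otimes d\mathbb P$-a.e.\ identity. Once these are in place, solving the two random linear ODEs and recognizing the total-derivative structure above is routine linear--quadratic computation.
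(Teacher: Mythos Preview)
The paper does not actually prove Theorem~\ref{theorem_mfg}; it is recalled from the companion paper \cite{feron2020price} (see the sentence preceding the theorem and the proof of Proposition~\ref{minor}, which cites ``Theorem~7 in \cite{feron2020price}''). So there is no in-paper proof to compare against directly.

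That said, your approach is correct and matches the method one can reconstruct from the fragments visible here: your Step~1 is exactly the variational first-order condition invoked in Proposition~\ref{minor}; your Step~3 is precisely the argument of Corollary~\ref{minorbis} (decompose $\phi^*=\bar\phi^*+\check\phi^*$, obtain a linear equation for the idiosyncratic martingale, solve it); and your Step~2 --- projecting onto $\mathbb F^0$, solving the linear ODE via the integrating factor $\eta$, and closing the martingale equation through the total-derivative identity for $(1+\lambda\Delta_{t,T})^{-1}$ --- is the natural analogue and mirrors the resolution technique used later in Theorem~\ref{stackexplicit} and Lemma~\ref{stacklem}. Your identification of the ``main obstacle'' is apt: the passage from the $\mathbb F$-martingale $M$ to the pair $(\overline M,\check M)$, and in particular the claim that $\overline M$ remains an $\mathbb F$-martingale (not merely $\mathbb F^0$), is exactly where the third bullet of Assumption~\ref{mfgass} is used, and the paper itself handles this point somewhat tersely in the proof of Corollary~\ref{minorbis}. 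One small caveat: the identity $\mathbb E[\,\cdot\mid\mathcal F_t]=\mathbb E[\,\cdot\mid\mathcal F^0_t]$ on $\mathcal F^0_T$-measurable variables that you invoke is slightly stronger than what Assumption~\ref{mfgass} literally states; in a fully rigorous write-up you would want either to assume it explicitly or to verify that the specific $\mathcal F^0_T$-measurable quantities appearing (namely $\bar\phi_T$ and $I_T$) inherit the required $\mathbb F$-martingale property from the assumption on $\overline X$ and $S$.
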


\section{A game of a major and minor agents}\label{extension}

In this section we proceed to characterize the Nash equilibrium in the Stackelberg mean field game with a major player.   
Since a single minor agent has an infinitesimal impact on the market and cannot influence the mean field or the strategy of the major agent, the problem of the generic minor agent is to maximize $J^{MF}(\phi, \bar \phi, \phi^0)$ for fixed $\bar \phi$ and $\phi^0$. On the other hand, by modifying her strategy $\phi^0$, the major agent may influence the strategies of the minor agents, and thus also the mean field $\bar \phi$. This leads to the following definition of mean field equilibrium. 
\begin{Definition}[Stackelberg mean field equilibrium]\label{stackdef}
We call the triple $\phi^*$, $\bar\phi^*$, $\phi^{0*}$ Stackelberg mean field equilibrium for the game with a major and minor players if the following holds:
\begin{itemize}
    \item[i.] $\phi^*$ and $\phi^{0*}$ are admissible strategies for, respectively, the representative minor and the major players, the consistency condition $\bar \phi^*_t = \mathbb E[\phi^*_t|\mathcal F_t]$ is satisfied for all $t\in [0,T]$ and for any other admissible strategy for the representative minor player $\phi$, 
    $$
    J^{MF}(\phi,\bar\phi^*,\phi^{0*})\leq J^{MF}( \phi^*,\bar\phi^*,\phi^{0*})
    $$
    \item[ii.] For any other triple $( \phi,\bar\phi,\phi^0)$ satisfying condition i.,  
    \begin{align}
    J^{MF,0}( \phi^{0},\bar\phi)\leq J^{MF,0}( \phi^{0*},\bar\phi^*).\label{majopt}
    \end{align}
\end{itemize}
\end{Definition}

\begin{Assumption}\label{stackass}\text{}
In addition to assumption \ref{mfgass}, we also assume that the process $X^0$ is a square integrable martingale with respect to the filtration $\mathbb F^0$. 
\end{Assumption}

We start with the characterization of the optimal strategy for the minor agent.

\begin{Proposition}[Minor representative agent]\label{minor}Let $\bar \phi$ and $\phi^0$ be fixed.
The minor agent strategy $\phi$ maximizes  \eqref{minorobj} over the set of admissible strategies if and only if: 
\begin{equation}\label{minoreq}
    \dot \phi_t = -\frac{Y_t+S_t+a\overline{\phi}_t+a^0\phi^0_t}{\alpha(t)}, \quad \forall t \in [0,T],
\end{equation}
where $Y$ is a $\mathbb F$-martingale satisfying $Y_T = \lambda(\phi_T-X_T)$.
\end{Proposition}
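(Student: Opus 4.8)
\emph{Proof plan.} The plan is to treat the maximization of \eqref{minorobj}, with $\bar\phi$ and $\phi^0$ frozen, as a strictly concave optimization in the control $\dot\phi$, to derive the Euler--Lagrange (first-order) condition by a perturbation argument, and then to use concavity to promote this condition to a characterization (necessary \emph{and} sufficient). First I would fix an admissible $\phi$ and an admissible perturbation, i.e.\ a process $\psi$ with $\psi_0=0$ whose derivative $\dot\psi$ is $\mathbb F$-adapted and square integrable, so that $\phi+\varepsilon\psi$ is admissible for every $\varepsilon\in\mathbb R$. Since $\bar\phi$ and $\phi^0$ do not depend on $\phi$, the map $\varepsilon\mapsto J^{MF}(\phi+\varepsilon\psi,\bar\phi,\phi^0)$ is a quadratic polynomial whose $\varepsilon^2$-coefficient is $-\tfrac12\,\mathbb E\big[\int_0^T\alpha(t)\dot\psi_t^2\,dt+\lambda\psi_T^2\big]\le 0$, so $J^{MF}$ is concave along every such line, and whose $\varepsilon$-coefficient is
\[
  -\,\mathbb E\Big[\int_0^T \big(\alpha(t)\dot\phi_t + S_t + a\bar\phi_t + a^0\phi^0_t\big)\dot\psi_t\,dt + \lambda(\phi_T-X_T)\psi_T\Big].
\]

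The key manipulation is to rewrite the terminal term as a time integral against $\dot\psi$. Since $\dot\phi$ is square integrable we have $\phi_T\in L^2$, and $X_T\in L^2$ by Assumption~\ref{mfgass}, so $\lambda(\phi_T-X_T)\in L^2$ and the process $Y_t:=\mathbb E[\lambda(\phi_T-X_T)\mid\mathcal F_t]$ is a well-defined square-integrable $\mathbb F$-martingale with $Y_T=\lambda(\phi_T-X_T)$; by the martingale property it is the unique $\mathbb F$-martingale with that terminal value. Writing $\psi_T=\int_0^T\dot\psi_t\,dt$ and using that $\dot\psi_t$ is $\mathcal F_t$-measurable together with the tower property (Fubini being justified by Cauchy--Schwarz and square integrability), one gets $\mathbb E[\lambda(\phi_T-X_T)\psi_T]=\mathbb E\big[\int_0^T Y_t\dot\psi_t\,dt\big]$, so the $\varepsilon$-coefficient equals $-\,\mathbb E\big[\int_0^T (\alpha(t)\dot\phi_t + S_t + a\bar\phi_t + a^0\phi^0_t + Y_t)\dot\psi_t\,dt\big]$.

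It then remains to conclude. By concavity of $\varepsilon\mapsto J^{MF}(\phi+\varepsilon\psi,\bar\phi,\phi^0)$, the strategy $\phi$ is optimal if and only if this $\varepsilon$-coefficient vanishes for every admissible $\psi$; since $\dot\psi$ ranges over all $\mathbb F$-adapted square-integrable processes, this is equivalent to $\alpha(t)\dot\phi_t + S_t + a\bar\phi_t + a^0\phi^0_t + Y_t=0$ for $dt\otimes d\mathbb P$-a.e.\ $(t,\omega)$, which is \eqref{minoreq} after dividing by $\alpha(t)>0$. Sufficiency can also be argued directly, without invoking abstract concavity: given $\phi^*$ satisfying \eqref{minoreq} and any admissible $\phi$, set $\psi:=\phi-\phi^*$; the quadratic expansion of $J^{MF}$ around $\phi^*$ has vanishing first-order term by \eqref{minoreq}, hence $J^{MF}(\phi,\bar\phi,\phi^0)=J^{MF}(\phi^*,\bar\phi,\phi^0)-\tfrac12\,\mathbb E\big[\int_0^T\alpha(t)\dot\psi_t^2\,dt+\lambda\psi_T^2\big]\le J^{MF}(\phi^*,\bar\phi,\phi^0)$.

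The main obstacle is technical rather than conceptual: making sure all integrability requirements hold so that one may differentiate under the expectation and move the conditional expectation inside the time integral, and checking that the admissible perturbations $\dot\psi$ really exhaust the space of $\mathbb F$-adapted $L^2$ controls, so that the pointwise relation can be read off the integral identity. I expect these points to be dispatched using square integrability of $\dot\phi$, $S$, $X$ (hence of $\bar\phi$, $\phi^0$ and $Y$) together with the Cauchy--Schwarz inequality.
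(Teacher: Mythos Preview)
Your proof is correct and follows the same variational/concavity route that underlies the paper's argument: the paper simply defers to the companion reference \cite{feron2020price} by absorbing $a^0\phi^0$ into the fundamental price, but the underlying computation there is precisely the quadratic expansion and tower-property manipulation you carry out (the same steps appear explicitly in the proof of Lemma~\ref{stacklem} in this paper). Your self-contained write-up is a faithful expansion of that deferred argument.
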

\begin{proof}
%The major agent's strategy is not affected by the strategy of the minor agent, hence the process $\phi^0$ is exogenous to the problem of the latter. 
The proof follows from the first step of the proof of Theorem \ref{theorem_mfg} (see Theorem 7 in \cite{feron2020price}) taking $\widetilde S = S + a_0\phi^0$ as fundamental price instead of $S$. 
\end{proof}

The problem of the major agent is more complex since the minor agents observe her actions and modify their strategies accordingly, which means that the mean field $\bar\phi$ depends on the major agent's strategy $\phi^0$, and the problem of the major agent effectively becomes a stochastic control problem.  We start with a reformulation of the definition of Stackelberg equilibrium in terms of $\bar \phi$ and $\phi
^0$ only. 
\begin{Lemma}\label{stacklem}
Let $(\bar\phi^*,\phi^{0*})$ be $\mathbb F^0$-adapted square integrable processes. There exists $\phi^*$ such that $(\phi^*,\bar\phi^*,\phi^{0*})$ is a Stackelberg mean field equilibrium if and only if the couple $(\bar\phi^*,\phi^{0*})$ satisfies the following conditions:
\begin{itemize}
    \item[i.] For every $\mathbb F^0$-adapted square integrable process $\nu$, 
$$
\mathbb E\left[\int_0^T \nu_t \{\alpha(t)\dot{\bar \phi}^*_t + S_t + a_0 \phi^{0*}_t + a \bar \phi^*_t\}dt + \lambda(\bar \phi^*_T - \overline X_T) \int_0^T \nu_t dt\right] = 0.
$$
\item[ii.] For every other couple $(\phi^{0},{\bar \phi})$ satisfying the condition i, the inequality \eqref{majopt} holds true. 
\end{itemize}
\end{Lemma}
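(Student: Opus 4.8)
The plan is to prove Lemma~\ref{stacklem} by showing that the representative minor agent's best response problem, although stated over $\mathbb F$-adapted strategies, is entirely determined by its $\mathbb F^0$-conditional projection, so that the Stackelberg equilibrium conditions collapse to conditions on $(\bar\phi^*,\phi^{0*})$ alone. First I would take the characterization of the minor agent's optimum from Proposition~\ref{minor}: given $\bar\phi$ and $\phi^0$ (both $\mathbb F^0$-adapted), the optimal $\phi$ solves $\dot\phi_t = -(Y_t + S_t + a\bar\phi_t + a^0\phi^0_t)/\alpha(t)$ with $Y$ an $\mathbb F$-martingale terminating at $\lambda(\phi_T - X_T)$. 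I would then condition everything on $\mathcal F^0_t$: since $S$, $\bar\phi$, $\phi^0$ are already $\mathbb F^0$-adapted and $\bar Y_t := \mathbb E[Y_t\mid\mathcal F^0_t]$ is an $\mathbb F^0$-martingale (tower property, as $\mathbb F^0\subset\mathbb F$), taking $\mathbb E[\cdot\mid\mathcal F^0_t]$ in the minor agent's equation gives $\dot{\bar\phi}_t = -(\bar Y_t + S_t + a\bar\phi_t + a^0\phi^0_t)/\alpha(t)$ with $\bar Y_T = \lambda(\bar\phi_T - \overline X_T)$ (using $\overline X_t = \mathbb E[X_t\mid\mathcal F^0_t]$ and $\bar\phi_T = \mathbb E[\phi_T\mid\mathcal F^0_t]$ at $t=T$). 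Conversely, given any $\mathbb F^0$-adapted $\bar\phi^*$ satisfying this projected equation together with a suitable $\mathbb F^0$-martingale, one reconstructs a genuine $\mathbb F$-adapted optimal $\phi^*$ by solving the minor agent's equation in the full filtration (this is exactly the construction in the proof of Theorem~\ref{theorem_mfg}), and by uniqueness of the minor agent's optimizer this $\phi^*$ satisfies the consistency condition $\bar\phi^*_t = \mathbb E[\phi^*_t\mid\mathcal F_t]$; hence condition~i of Definition~\ref{stackdef} holds for $(\phi^*,\bar\phi^*,\phi^{0*})$ precisely when $(\bar\phi^*,\phi^{0*})$ admits such a projected representation.

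The next step is to re-express the projected optimality system as the weak (variational) condition~i of the Lemma. The system $\dot{\bar\phi}_t = -(\bar Y_t + S_t + a^0\phi^{0}_t + a\bar\phi_t)/\alpha(t)$, $\bar Y$ an $\mathbb F^0$-martingale, $\bar Y_T = \lambda(\bar\phi_T - \overline X_T)$ is equivalent to saying that the process $\alpha(t)\dot{\bar\phi}_t + S_t + a^0\phi^{0}_t + a\bar\phi_t = -\bar Y_t$ is (the negative of) an $\mathbb F^0$-martingale with terminal value $-\lambda(\bar\phi_T - \overline X_T)$. Testing an $\mathbb F^0$-martingale $M$ with $M_t = -\bar Y_t$ against an arbitrary $\mathbb F^0$-adapted square integrable $\nu$ and using that $\mathbb E[\int_0^T \nu_t M_t\,dt] = \mathbb E[\int_0^T \nu_t\,\mathbb E[M_T\mid\mathcal F^0_t]\,dt] = \mathbb E[M_T\int_0^T\nu_t\,dt]$ (by the martingale property and Fubini, justified by square integrability), one gets exactly
$$
\mathbb E\left[\int_0^T \nu_t\{\alpha(t)\dot{\bar\phi}_t + S_t + a_0\phi^{0}_t + a\bar\phi_t\}\,dt + \lambda(\bar\phi_T - \overline X_T)\int_0^T\nu_t\,dt\right] = 0.
$$
Conversely, if this holds for all such $\nu$, then choosing $\nu$ to be indicators of the form $\mathbbm 1_{A}\mathbbm 1_{[s,T]}$ for $A\in\mathcal F^0_s$ recovers the martingale property of $-\bar Y_t := \alpha(t)\dot{\bar\phi}_t + S_t + a_0\phi^{0}_t + a\bar\phi_t + \lambda(\bar\phi_T-\overline X_T)$... more precisely one shows that the remainder process obtained after subtracting the terminal-penalty part is an $\mathbb F^0$-martingale, and reads off the boundary condition. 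This gives the equivalence between condition~i of the Lemma and the existence of $\phi^*$ satisfying condition~i of Definition~\ref{stackdef}.

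Finally, condition~ii of the Lemma and condition~ii of Definition~\ref{stackdef} are literally the same statement once one observes that $J^{MF,0}$ depends only on $(\phi^0,\bar\phi)$ and that, by the equivalence just established, the feasible sets $\{(\phi^0,\bar\phi):\text{cond. i of Def.~\ref{stackdef} holds for some }\phi\}$ and $\{(\phi^0,\bar\phi):\text{cond. i of the Lemma holds}\}$ coincide; so the optimization in \eqref{majopt} is over the same set and the two conditions~ii are identical. I expect the main obstacle to be the converse direction of the first step: starting from an abstract $\mathbb F^0$-adapted pair $(\bar\phi^*,\phi^{0*})$ satisfying the variational condition, one must reconstruct a bona fide $\mathbb F$-adapted minor strategy $\phi^*$ that is simultaneously optimal for the minor agent's problem \emph{and} has conditional mean $\bar\phi^*$, and argue that this reconstruction is well posed (existence, square integrability, uniqueness) — this is where one leans on the explicit solution machinery behind Theorem~\ref{theorem_mfg} and Proposition~\ref{minor} rather than redoing it, decomposing $X = \overline X + \check X$ and noting the $\check X$-part of $\phi^*$ integrates to zero under $\mathbb E[\cdot\mid\mathcal F^0_t]$ so that it does not affect $\bar\phi^*$. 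The measure-theoretic justification of the Fubini/tower-property interchange and the density of test processes $\nu$ for recovering the martingale property are routine but need square integrability from Assumptions~\ref{mfgass}--\ref{stackass} to be invoked cleanly.
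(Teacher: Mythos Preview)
Your proposal is correct and takes essentially the same approach as the paper: reduce condition~i of Definition~\ref{stackdef} to the projected first-order system via Proposition~\ref{minor} and the martingale/integration-by-parts identity $\mathbb E[\int_0^T \nu_t \bar Y_t\,dt] = \mathbb E[\bar Y_T\int_0^T \nu_t\,dt]$, then in the converse direction reconstruct $\phi^*$ from $(\bar\phi^*,\phi^{0*})$ via Proposition~\ref{minor} and verify the consistency $\mathbb E[\phi^*_t\mid\mathcal F^0_t]=\bar\phi^*_t$. For the obstacle you flagged, the paper's resolution is more elementary than invoking the explicit formulas of Theorem~\ref{theorem_mfg}: setting $\tilde\phi^*_t := \mathbb E[\phi^*_t\mid\mathcal F^0_t]$ and subtracting the two projected equations yields $\alpha(t)(\dot{\bar\phi}^*_t - \dot{\tilde\phi}^*_t) + Y^*_t - \widetilde Y_t = 0$ with terminal condition $Y^*_T - \widetilde Y_T = \lambda(\bar\phi^*_T - \tilde\phi^*_T)$, and the resulting linear integral equation for the martingale difference $\widetilde Y - Y^*$ has zero as its unique solution, forcing $\tilde\phi^*\equiv\bar\phi^*$.
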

\begin{proof}
Assume first that $(\phi^*,\bar\phi^*,\phi^{0*})$ is a Stackelberg mean field equilibrium. Then, For every $\mathbb F^0$-adapted square integrable process $\nu$,
$$
J^{MF}(\phi^* + \int_0^\cdot \nu_s ds, \overline \phi^*,\phi^{0*}) \leq J^{MF}(\phi^*, \overline \phi^*,\phi^{0*}). 
$$
Developing the functionals we get,
\begin{align*}
&\mathbb E\left[\frac12 \int_0^T \alpha(t)\nu_t^2 dt + \frac\lambda2 \left(\int_0^T \nu_t dt\right)^2\right]\\ & + \mathbb E\left[\int_0^T \nu_t\left\{\alpha(t)\dot \phi_t^* + S_t + a \bar \phi^*_t + a_0 \phi^{0*}_t \right\}dt + \lambda(\phi^*_T - X_T)\int_0^T \nu_t dt\right]\geq 0,
\end{align*}
and since $\nu$ is arbitrary, we see that this is equivalent to 
$$\mathbb E\left[\int_0^T \nu_t\left\{\alpha(t)\dot \phi_t^* + S_t + a \bar \phi^*_t + a_0 \phi^{0*}_t \right\}dt + \lambda(\phi^*_T - X_T)\int_0^T \nu_t dt\right]= 0.
$$
Taking conditional expectations and using Fubini's theorem, we then get condition i.~of the lemma. 

Assume now that conditions i.~and ii.~of the lemma hold true, and let $\phi^*$ be given by Proposition \ref{minor} applied to the couple $(\bar \phi^*, \phi^{0*})$. Define $\tilde \phi^*_t:= \mathbb E[\phi^*_t|\mathcal F^0_t]$. It remains to show that $\tilde \phi^*_t = \bar \phi^*_t$. Let $Y^*$ be an $\mathbb F^0$-martingale satisfying $Y^*_T = \lambda(\bar \phi^*_T - \overline X_T)$. By integration by parts, condition i. of the lemma is equivalent to 
$$
\mathbb E\left[\int_0^T \nu_t\left\{\alpha(t)\dot{\bar \phi}^*_t + S_t a_0 \phi^{0*}_t + a \bar \phi^*_t + Y^*_t\right\}dt\right] = 0,
$$
and since $\nu$ is arbitrary,
$$
\alpha(t)\dot{\bar \phi}^*_t + S_t+ a_0 \phi^{0*}_t + a \bar \phi^*_t + Y^*_t = 0,
$$
for all $t$. On the other hand, by Lemma \ref{minor}, taking the expectation with respect to $\mathbb F^0$, we get that there exists a $\mathbb F^0$-martingale $\widetilde Y$ with $\widetilde Y_T = \lambda(\tilde \phi_T - \overline X_T)$, and such that 
$$
\alpha(t)\dot{\tilde \phi}^*_t + S_t +a_0 \phi^{0*}_t + a \bar \phi^*_t + \widetilde Y_t = 0.
$$
Substracting this expression from the previous one, we get
$$
\alpha(t) (\dot{\bar \phi}^*_t - \dot{\tilde \phi}^*_t ) + Y^*_t - \widetilde Y_t = 0,\qquad Y^*_T - \widetilde Y_T = \lambda(\bar \phi^*_T - \tilde \phi^*_T)
$$
Thus, 
$$
{\bar \phi}^*_t - {\tilde \phi}^*_t = \int_0^t \frac{\widetilde Y_s - Y^*_s}{\alpha(s)}ds
$$
and therefore, using the terminal condition and the martingale property, 
$$
\widetilde Y_t - Y^*_t = \mathbb E[\widetilde Y_T - Y^*_T|\mathcal F^0_t] = \lambda \int_0^t\frac{\widetilde Y_s - Y^*_s}{\alpha(s)}ds + \lambda(\widetilde Y_t - Y^*_t)\int_t^T \frac{ds}{\alpha(s)}. 
$$
The unique solution of this linear equation is $\widetilde Y_t = Y^*_t$ for all $t$, and therefore $\tilde \phi^*_t = \bar \phi^*_t$ for all $t$.  
\end{proof}

The following proposition provides a martingale characterization of the Stackelberg mean field equilibrium. 
\begin{Proposition}\label{major}
Let $(\bar\phi^*,\phi^{0*})$ be $\mathbb F^0$-adapted square integrable processes. There exists $\phi^*$ such that $(\phi^*,\bar\phi^*,\phi^{0*})$ is a Stackelberg mean field equilibrium if and only if 
\begin{equation}\label{majorstrat}
    \dot \phi^{0*}_t =- \frac{M^0_t+S_t+a\overline{\phi}_t-a^0N_t}{\alpha_0(t)}, \quad \forall t\in [0,T],
\end{equation}
where $M^0$ is an $\mathbb{F}^0$-martingale and $N$ is an absolutely continuous $\mathbb{F}^0$-adapted process, and there exists an $\mathbb{F}^0$-martingale $M$, and an  $\mathbb{F}^0$-martingale $\overline Y$  such that the following system of equations is satisfied: 
\begin{equation} \label{adjointp}
    \left\{\begin{array}{lll}
    & M^0_T = a^0 N_T + a^0 \phi^0_T + \lambda_0(\phi^0_T - X^0_T) \\
    & M_t - a\phi^0_t + \alpha(t) \dot N_t - a N_t = 0,\qquad M_T = a \phi^0_T + (a+\lambda)N_T\\
    &\overline Y_t + \alpha(t) \dot{\bar \phi}_t + S_t + a \bar \phi_t + a^0\phi^0_t = 0,\qquad \overline Y_T = \lambda(\bar\phi_T - \overline X_T)
    \end{array}
    \right.
\end{equation}
\end{Proposition}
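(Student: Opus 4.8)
The plan is to work from Lemma~\ref{stacklem}, which reduces the task to characterizing the $\mathbb F^0$-adapted square integrable couples $(\bar\phi^*,\phi^{0*})$ satisfying its conditions~i.~and~ii.; once such a couple is in hand, the minor strategy $\phi^*$ is supplied by Proposition~\ref{minor}, and the resulting triple is a Stackelberg mean field equilibrium. The third line of \eqref{adjointp} is, up to an integration by parts, nothing but condition~i.: repeating the argument in the proof of Lemma~\ref{stacklem} (remove the term $\lambda(\bar\phi^*_T-\overline X_T)\int_0^T\nu_t\,dt$ by parts, use the arbitrariness of $\nu$, and introduce the martingale $\mathbb E[\lambda(\bar\phi^*_T-\overline X_T)\mid\mathcal F^0_\cdot]$) shows that condition~i.~holds if and only if there is an $\mathbb F^0$-martingale $\overline Y$ with $\overline Y_T=\lambda(\bar\phi^*_T-\overline X_T)$ such that $\overline Y_t+\alpha(t)\dot{\bar\phi}^*_t+S_t+a\bar\phi^*_t+a^0\phi^{0*}_t=0$. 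Thus the real point is to show that, among such couples, optimality of the major agent (condition~ii.) is equivalent to \eqref{majorstrat} together with the first two lines of \eqref{adjointp}.

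By Proposition~\ref{minor} --- equivalently, by Theorem~\ref{theorem_mfg} applied with fundamental price $S+a^0\phi^0$ instead of $S$ --- condition~i.~determines, for each admissible $\phi^0$, a unique follower response $\bar\phi=\bar\phi[\phi^0]$, and the map $\phi^0\mapsto\bar\phi[\phi^0]$ is affine and continuous for the $L^2$ topology. Hence $\phi^0\mapsto J^{MF,0}(\phi^0,\bar\phi[\phi^0])$ is a quadratic functional on the Hilbert space of admissible major strategies; using $\alpha_0,\lambda_0>0$ together with the explicit form of the follower response, one checks that it is strictly concave and coercive, so that it has a unique maximizer and $\phi^{0*}$ is optimal \emph{if and only if} its G\^ateaux derivative vanishes in every admissible direction. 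Both implications of the proposition therefore reduce to a single statement: this G\^ateaux derivative vanishes if and only if \eqref{majorstrat} and \eqref{adjointp} hold.

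To compute the derivative, replace $\phi^0$ by $\phi^0+\varepsilon\beta$ with $\beta$ an $\mathbb F^0$-adapted, absolutely continuous, square integrable process satisfying $\beta_0=0$; the follower response then perturbs as $\bar\phi+\varepsilon\delta$, $\overline Y+\varepsilon\zeta$, where $(\delta,\zeta)$ solves the linearization of the third line of \eqref{adjointp},
\[
\alpha(t)\dot\delta_t+a\delta_t+\zeta_t+a^0\beta_t=0,\qquad \delta_0=0,\qquad \zeta_T=\lambda\delta_T,\qquad \zeta\ \text{an}\ \mathbb F^0\text{-martingale},
\]
and the first-order condition reads: for every such $\beta$,
\[
\mathbb E\left[\int_0^T(\alpha_0(t)\dot\phi^0_t+S_t+a\bar\phi_t+a^0\phi^0_t)\dot\beta_t\,dt+\int_0^T(a\delta_t+a^0\beta_t)\dot\phi^0_t\,dt+\lambda_0(\phi^0_T-X^0_T)\beta_T\right]=0.
\]
Introduce $(N,M)$ as the unique solution of the forward--backward system in the second line of \eqref{adjointp}: this is again a linear system of the affine linear-quadratic type handled in Theorem~\ref{theorem_mfg}, so its well-posedness in the relevant $L^2$ spaces (including the determination of $N_0$) follows from the same integrating-factor and contraction argument, and the system is built precisely so as to be the forward--backward adjoint of the linearized follower system. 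Pairing the two systems --- multiply the linearized follower equation by $N$ and the adjoint equation by $\delta$, integrate, take expectations, and apply It\^o's formula and integration by parts, using both terminal conditions $\zeta_T=\lambda\delta_T$ and $M_T=a\phi^0_T+(a+\lambda)N_T$ --- produces the identity $\mathbb E[a\int_0^T\delta_t\dot\phi^0_t\,dt]=\mathbb E[a^0\int_0^T\dot N_t\beta_t\,dt]$. Substituting this into the first-order condition and integrating the $\dot\beta$-terms by parts against the semimartingales $N$, $M^0$ and $\phi^0$, where $M^0$ is defined as the $\mathbb F^0$-martingale with terminal value $M^0_T=a^0N_T+a^0\phi^0_T+\lambda_0(\phi^0_T-X^0_T)$ (the first line of \eqref{adjointp}) --- note that $\dot\phi^0$ need not be a semimartingale, so every integration by parts must be carried on the smooth perturbation $\beta$ --- the first-order condition collapses to $\mathbb E[\int_0^T\dot\beta_t(\alpha_0(t)\dot\phi^0_t+S_t+a\bar\phi_t-a^0N_t+M^0_t)\,dt]=0$. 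As $\dot\beta$ ranges over all $\mathbb F^0$-adapted square integrable processes, this is equivalent to the pointwise identity $\alpha_0(t)\dot\phi^0_t+M^0_t+S_t+a\bar\phi_t-a^0N_t=0$, which is \eqref{majorstrat}. Reading this chain of equivalences forward yields the ``only if'' direction; reading it backwards --- with strict concavity turning the first-order condition back into condition~ii., and Lemma~\ref{stacklem} providing $\phi^*$ --- yields the ``if'' direction.

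The main obstacle is the pairing step. Because the follower's backward variable is anticipative --- in the linearization $\zeta_t=\mathbb E[\lambda\delta_T\mid\mathcal F^0_t]$ involves the whole future of $\delta$ --- the relevant adjoint is not the textbook forward-state/backward-costate pair, and the extra $\mathbb F^0$-martingale $M$ in \eqref{adjointp} is exactly what is needed to close it; the delicate point is to check that pairing the two forward--backward systems yields precisely the boundary terms that cancel every $\delta$-, $\delta_T$- and $\beta_T$-dependent contribution in the first-order condition. Secondary, more routine, difficulties are the well-posedness of the linearized follower system and of the $(N,M)$ system (both of the type treated in Theorem~\ref{theorem_mfg}), the strict concavity and coercivity of the reduced functional $\phi^0\mapsto J^{MF,0}(\phi^0,\bar\phi[\phi^0])$, and the bookkeeping forced by the fact that admissible strategies are merely square integrable rather than semimartingales.
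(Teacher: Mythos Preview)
Your argument is correct, but it takes a genuinely different route from the paper. The paper proceeds by a \emph{Lagrangian} argument: it adjoins the constraint of Lemma~\ref{stacklem}(i) to $J^{MF,0}$ with a square-integrable $\mathbb F^0$-adapted multiplier $\nu$, sets $N_t=\int_0^t\nu_s\,ds$, and reads off \eqref{majorstrat} and the first two lines of \eqref{adjointp} as the first-order conditions of the Lagrangian in $\phi^0$ and in $\bar\phi$ respectively; the third line is simply the constraint itself. In your approach, by contrast, you substitute the constraint, work with the reduced functional $\phi^0\mapsto J^{MF,0}(\phi^0,\bar\phi[\phi^0])$, and recover $(N,M)$ not as a multiplier but as the \emph{adjoint FBSDE} to the linearized follower system --- the pairing identity $\mathbb E[a\int_0^T\delta_t\dot\phi^0_t\,dt]=\mathbb E[a^0\int_0^T\dot N_t\beta_t\,dt]$ is precisely what the Lagrangian derivation gets for free. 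The paper's route is shorter and makes the interpretation of $N$ as (the primitive of) a multiplier transparent, but it is somewhat cavalier about sufficiency: the Lagrangian is not jointly concave in $(\phi^0,\bar\phi)$, so the passage from first-order conditions to optimality is not automatic. Your reduction to a single strictly concave problem makes the ``if and only if'' cleaner, at the price of the nontrivial pairing computation and the concavity/coercivity check you flag as routine. One small point: in the paper $N_0=0$ by construction (it is $\int_0^\cdot\nu$), so in your framework you should impose $N_0=0$ rather than expect the $(N,M)$ system to determine it; this is also what makes the boundary terms in your pairing step vanish.
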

\begin{proof}
The optimization problem of the major agent consists in maximizing the objective function \eqref{majorobj} under the constraint of Lemma \ref{stacklem}, part i. Let us introduce the Lagrangian for this constrained optimization problem, which writes:
\begin{align*}
L(\phi^0, \bar \phi, \nu) &=   \mathbb E \left[\int_0^T  \frac{\alpha_0(t)}{2}\dot {\phi^0_t}^2 +(S_t+ {a}\overline \phi_t + {a}^0\phi^0_t)\dot \phi^0_t dt + \frac{\lambda_0}{2} (\phi^0_T-X^0_T)^2\right] \\ &+ \mathbb E\left[\int_0^T \nu_t \left\{\alpha(t)\dot{\bar \phi}_t + S_t + a^0 \phi^0_t + a \bar \phi_t\right\}dt + \lambda(\bar \phi_T - \overline X_T )\int_0^T \nu_t dt\right],
\end{align*}
where $\nu$ is a square integrable $\mathbb F^0$-adapted process. We claim that $\phi^0$ is the solution of the problem \eqref{majorobj} if and only if there exist $\nu$ and $\bar\phi$ such that $(\phi^0,\bar \phi)$ maximizes the Lagrangian $L(\cdot,\cdot,\nu)$, and $\bar \phi$ satisfies the constraint of Lemma \ref{stacklem}. Indeed, let $(\phi^0,\bar \phi, \nu)$ be such a triple and $(\phi^{0\prime}, \bar \phi')$ be another pair of strategies satisfying the constraint of Lemma \ref{stacklem}. Then, 
$$
L(\phi^0, \bar \phi, \nu) \geq L(\phi^{0\prime}, \bar \phi', \nu),
$$
and since both $\bar \phi$ and $\bar \phi'$ satisfy the constraint of Lemma \ref{stacklem}, this implies that inequality \eqref{majopt} holds true.

We now turn to the problem of maximizing the Lagrangian. Let $N_t = \int_0^t \nu_s ds$. The first order condition for $\phi_0$ writes: there exists a martingale $M^0$ such that
$$
M^0_t  +  \alpha_0(t) \dot \phi^0_t + S_t + a\bar \phi_t - a^0 N_t = 0,\qquad M^0_T = a^0 N_T + a^0 \phi^0_T + \lambda_0(\phi^0_T - X^0_T).  
$$
The first order condition for $\bar \phi$ writes: there exists a martingale $M$ such that
$$
M_t - a\phi^0_t + \alpha(t) \dot N_t - a N_t = 0,\qquad M_T = a \phi^0_T + (a+\lambda)N_T. 
$$
Finally, the last condition is given by the constraint that $\bar \phi$ is optimal for the generic minor agent. Hence conditioning \eqref{minoreq} by the common noise, there exists a martingale $\overline Y$ such that
$$
\overline Y_t + \alpha(t) \dot{\bar \phi}_t + S_t + a \bar \phi_t + a^0\phi^0_t = 0,\qquad \overline Y_T = \lambda(\bar\phi_T - \overline X_T). 
$$
\end{proof}
The following theorem provides an explicit characterization of the equilibrium in the Stackelberg setting.

\begin{Theorem}[Explicit solution]\label{stackexplicit}
Let  $\Xi_t = (\phi^0_t, N_t, \bar \phi_t)'$. The unique equilibrium of the mean field game with a major agent is characterized by the following differential equation: 
\begin{align}
B(t)^{-1}A\, \Xi_t + \dot \Xi_t = -\left(\begin{aligned} &\alpha_0(t)^{-1}(M^0_t +S_t) \\  &\alpha(t)^{-1}M_t \\  &\alpha(t)^{-1}(\overline Y_t + S_t)\end{aligned}\right),\label{inhom}
\end{align}
where $N$ is a $\mathbb{F}^0$-adapted process with $N_0=0$ and $M^0$, $M$, $\overline Y$ are $\mathbb{F}^0$-martingales that satisfy: 
\begin{equation}\label{system}
    \left\{\begin{array}{lll}
     & M^0_T = a^0 N_T + a^0 \phi^0_T + \lambda_0(\phi^0_T - X^0_T) \\
   & M_T = a \phi^0_T + (a+\lambda)N_T\\
    & \overline Y_T = \lambda(\bar\phi_T - \overline X_T)
    \end{array}
    \right.
\end{equation} 
and $$
A = \left(\begin{aligned}
    && 0 && -a^0 && a \\
    && -a && -a && 0 \\
    && a^0 && 0 && a
  \end{aligned}\right),\qquad B(t) = \left(\begin{aligned}
    && \alpha_0(t) && 0 && 0 \\
    && 0 && \alpha(t) && 0 \\
    && 0 && 0 && \alpha(t)
  \end{aligned}\right).
$$
Denoting by $\Phi(t)$ the fundamental matrix solution of the equation $B(t)^{-1}A\, \Xi_t + \dot \Xi_t  = 0$, the solution is given in the integral form by the following expression: 
\begin{align*}
 \Xi_t &= \Upsilon_t -\Pi_{0,t} (I + D\Pi_{0,T})^{-1} (D\widetilde \Upsilon_0 - \Lambda \mathcal X_0) \\&- \int_0^t \Pi_{s,t} (I + D\Pi_{s,T})^{-1} (Dd\widetilde \Upsilon_s - \Lambda d\mathcal X_s). 
\end{align*}
where
\begin{align*}
\Upsilon_t &:=  - \Phi(t)\int_0^t \Phi(s)^{-1} \left(\begin{aligned}&\alpha_0(s)^{-1}S_s \\  &0 \\  &\alpha(s)^{-1}S_s\end{aligned}\right)ds,\quad \mathcal X_s:=\left(\begin{aligned}&X^0_s  \\  &0 \\  &\overline X_s\end{aligned}\right),\quad\\ D &:= \left(\begin{aligned}
    &&a^0 + \lambda_0 &&  a^0 && 0\\
    && a && a+\lambda && 0\\
    && 0 && 0 && \lambda
  \end{aligned}\right),\quad \Lambda := \left(\begin{aligned}
    && \lambda_0 &&  0 && 0\\
    && 0 &&0&& 0\\
    && 0 && 0 && \lambda
  \end{aligned}\right),\\
  \Pi_{s,t}&:=\Phi(t)\int_s^t \Phi(u)^{-1} B(u)^{-1} du \quad  \text{and} \quad \widetilde \Upsilon_t = \mathbb E[\Upsilon_T|\mathcal F_t].
  \end{align*}           
\end{Theorem}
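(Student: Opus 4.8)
The plan is to start from the martingale characterization in Proposition~\ref{major}, turn the three ``forward'' equations for $\phi^0$, $N$ and $\bar\phi$ into a single inhomogeneous linear ODE for $\Xi=(\phi^0,N,\bar\phi)'$, solve that ODE by variation of constants, and then close the system using the terminal conditions in \eqref{system} together with the $\mathbb F^0$-martingale property of the adjoint processes. For the first step, dividing the first line of \eqref{adjointp} by $\alpha_0(t)$ and the next two by $\alpha(t)$ and stacking them gives exactly \eqref{inhom}, whose right-hand side is $-B(t)^{-1}(\mathcal M_t+\mathcal S_t)$ with $\mathcal M:=(M^0,M,\overline Y)'$ and $\mathcal S_t:=(S_t,0,S_t)'$, and with $A,B(t)$ read off directly; the three terminal conditions become $\mathcal M_T=D\Xi_T-\Lambda\mathcal X_T$. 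Since the intraday positions vanish at the opening of the market (consistently with $\phi^*_0=0$ in Theorem~\ref{theorem_mfg}) and $N_0=0$ by construction, the initial condition is $\Xi_0=0$.

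\emph{Variation of constants.} Writing $\Phi$ for the fundamental matrix with $\Phi(0)=I$, we obtain $\Xi_t=\Upsilon_t-\Phi(t)\int_0^t\Phi(s)^{-1}B(s)^{-1}\mathcal M_s\,ds$, where $\Upsilon$ is precisely the contribution of the deterministic source $\mathcal S$. Since $\mathcal M$ is a square-integrable $\mathbb F^0$-martingale and $s\mapsto\Phi(s)^{-1}B(s)^{-1}$ is continuous and of finite variation, integration by parts (no bracket term) together with the identity $\Phi(t)\int_0^s\Phi(u)^{-1}B(u)^{-1}du=\Pi_{0,t}-\Pi_{s,t}$ yields $\Phi(t)\int_0^t\Phi(s)^{-1}B(s)^{-1}\mathcal M_s\,ds=\Pi_{0,t}\mathcal M_0+\int_0^t\Pi_{s,t}\,d\mathcal M_s$, hence $\Xi_t=\Upsilon_t-\Pi_{0,t}\mathcal M_0-\int_0^t\Pi_{s,t}\,d\mathcal M_s$.

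\emph{Closing the system.} Evaluate this at $t=T$, substitute into $\mathcal M_T=D\Xi_T-\Lambda\mathcal X_T$, and take $\mathbb E[\,\cdot\,|\mathcal F^0_t]$. Because $X^0$ and $\overline X$ are $\mathbb F^0$-martingales, $\mathbb E[\mathcal X_T|\mathcal F^0_t]=\mathcal X_t$; $\int_0^{\cdot}\Pi_{s,T}\,d\mathcal M_s$ is a martingale; and $\Upsilon_T$ being $\mathcal F^0_T$-measurable, $\mathbb E[\Upsilon_T|\mathcal F^0_t]=\mathbb E[\Upsilon_T|\mathcal F_t]=\widetilde\Upsilon_t$ since the minor agent's private information is independent of $\mathbb F^0$. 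This gives $\mathcal M_t+D\int_0^t\Pi_{s,T}\,d\mathcal M_s=D\widetilde\Upsilon_t-\Lambda\mathcal X_t-D\Pi_{0,T}\mathcal M_0$; taking $t=0$ gives $(I+D\Pi_{0,T})\mathcal M_0=D\widetilde\Upsilon_0-\Lambda\mathcal X_0$, and passing to differentials gives $(I+D\Pi_{t,T})\,d\mathcal M_t=D\,d\widetilde\Upsilon_t-\Lambda\,d\mathcal X_t$. Provided $I+D\Pi_{s,T}$ is invertible for every $s\in[0,T]$, we solve for $\mathcal M_0$ and $d\mathcal M_t$ and substitute back into the formula of the previous paragraph, which reproduces exactly the claimed integral representation of $\Xi$. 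As every step is an equivalence, this $\Xi$ (together with the corresponding $\mathcal M$, which is automatically an $\mathbb F^0$-martingale) solves \eqref{adjointp}; combined with Propositions~\ref{minor} and \ref{major} this yields both existence and uniqueness of the Stackelberg mean field equilibrium.

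\emph{Main obstacle.} The crux is the invertibility of $I+D\Pi_{s,T}$ on all of $[0,T]$ --- a ``no conjugate point'' property reminiscent of linear--quadratic control. Since $\Pi_{T,T}=0$ it holds near $s=T$, and I would propagate it down to $s=0$ either by showing that the homogeneous version of \eqref{adjointp} (with $S,X^0,\overline X$ set to $0$) admits only the trivial solution --- which forces $\det(I+D\Pi_{s,T})\neq0$ on $[0,T]$ and follows from strict concavity of the two objective functionals, guaranteed by $\alpha,\alpha_0,\lambda,\lambda_0>0$ --- or by a direct sign/monotonicity analysis exploiting the block-diagonal structure of $D$ (the $2\times2$ block coupling $\phi^0$ and $N$ being a nonsingular positive object, plus the scalar $\lambda$) and the explicit form of $\Pi_{s,T}$. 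A secondary, routine point is the integrability bookkeeping: one checks $\mathcal M,\widetilde\Upsilon\in L^2$ and that all stochastic integrals are genuine martingales, which follows from the square-integrability of $S$, $X$, $\overline X$, $X^0$ in Assumptions~\ref{mfgass}--\ref{stackass}.
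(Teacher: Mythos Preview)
Your proposal is correct and follows essentially the same route as the paper: rewrite \eqref{adjointp} as the inhomogeneous linear system \eqref{inhom}, solve by variation of constants in terms of the martingale vector $\mathcal M=(M^0,M,\overline Y)'$, then use the terminal condition $\mathcal M_T=D\Xi_T-\Lambda\mathcal X_T$ together with the $\mathbb F^0$-martingale property to pin down $\mathcal M_0$ and $d\mathcal M_t$, and substitute back. The only cosmetic difference is the order of operations---you integrate by parts first and then condition, while the paper conditions first and integrates by parts at the end---but the computations and the resulting formulas are identical. Your discussion of the invertibility of $I+D\Pi_{s,T}$ is a point the paper uses implicitly without comment, so flagging it (and sketching the strict-concavity argument) is a genuine addition rather than a deviation.
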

\begin{Remark}\label{remstack}
If $\alpha_0(t) = c\alpha(t)$ for some constant $c$, the fundamental matrix solution is given explicitly by
$$
\Phi(t) = \exp\left(-\int_0^t B(s)^{-1}A ds\right)
$$
\end{Remark}

\begin{proof}
From equations \eqref{minoreq} in Proposition \ref{minor} and \eqref{adjointp} in Proposition \ref{major}, we immediately deduce the expression of the characterizing differential equation of the equilibrium \eqref{inhom}.

Let $\Phi(t)$ be the fundamental matrix solution of the equation $B(t)^{-1}A\, \Xi_t + \dot \Xi_t = 0 $, that is, for every $C\in \mathbb R^3$, the solution with initial condition $\Xi_0 = C$ is given by $\Phi(t)C$. By variation of constants we have that the solution of \eqref{inhom} is given by:
$$
\Xi_t =  - \Phi(t)\int_0^t \Phi(s)^{-1} \left(\begin{aligned}&\alpha_0(s)^{-1}(M^0_s +S_s) \\  &\alpha(s)^{-1}M_s \\  &\alpha(s)^{-1}(\overline Y_s + S_s)\end{aligned}\right)ds. 
$$
Letting:
$$ \mathcal M_s:=\left(\begin{aligned}M^0_s  \\  M_s \\  \overline Y_s\end{aligned}\right) \quad \text{and}\quad \widehat \Xi_t = \Xi_t - \Upsilon_t,
$$
we obtain the simplified equation:
$$
\widehat\Xi_t =  - \Phi(t)\int_0^t \Phi(s)^{-1} B(s)^{-1}\mathcal M_s ds. 
$$
and finally, using \eqref{system} and the martingale property, the martingale components satisfy:
\begin{align*}
\mathcal M_t =-D \Phi(T)\int_0^t \Phi(s)^{-1} B(s)^{-1}\mathcal M_s ds -D \Pi_{t,T}\mathcal M_t + D\widetilde \Upsilon_t -\Lambda \mathcal X_t,
\end{align*}
From this we deduce, on the one hand,
\begin{align*}
\mathcal M_0 &= (I + D\Pi_{0,T})^{-1} (D\widetilde \Upsilon_0 -\Lambda\mathcal X_0),
\end{align*}
and on the other hand,
\begin{align*}
(I+D\Pi_{t,T})d\mathcal M_t = Dd\widetilde \Upsilon_t -\Lambda d\mathcal X_t,
\end{align*}
so that finally:

\begin{align*}
 \Xi_t &= \Upsilon_t +  \int_0^t d\Pi_{s,t}\cdot \mathcal M_s = \Upsilon_t- \Pi_{0,t}\mathcal M_0 - \int_0^t \Pi_{s,t} d\mathcal M_s \\ &= \Upsilon_t -\Pi_{0,t} (I + D\Pi_{0,T})^{-1} (D\widetilde \Upsilon_0 - \Lambda \mathcal X_0) \\ &- \int_0^t \Pi_{s,t} (I + D\Pi_{s,T})^{-1} (Dd\widetilde \Upsilon_s - \Lambda d\mathcal X_s). 
\end{align*}
\end{proof}

Let us make some comments on how minor and major player strategies change when the parameters of the model vary. First, when the major player has no price impact, $a_0 = 0$, we recover the homogeneous mean field setting optimal strategy for the minor player from \eqref{adjointp} and \eqref{system}:
$$\bar \phi_t = \int_0^t\frac{\eta(s,t)}{\alpha(s)}(\overline{Y}_s+S_s)ds, $$
where 
$\overline{Y}$ satisfies the equation:
\begin{align*}\left\{
    \begin{array}{ll}
       &\overline{Y}_t + \alpha(t) \dot{\bar{\phi}}^*_t + S_t + a \bar \phi^*_t = 0  \\
       &\overline{Y}_T = - \lambda(\bar \phi^*_T - \overline{X}_T).
    \end{array}\right.
\end{align*}
Second, we explore the limiting behavior of the optimal strategies for the major agent and for the mean field in various limiting cases. In this corollary, we use the notation of Theorem \ref{stackexplicit}.
\begin{Corollary}\label{limit2.cor}
${}$
\begin{itemize}
\item[i.] Assume that the fundamental price process $S$ is a martingale. Then the equilibrium mean field position of minor agents, and the position of the major agent satisfy
\begin{align*}
 \Xi_t &=  -\Pi_{0,t} (I + D\Pi_{0,T})^{-1} \left(\begin{aligned}S_0 - \lambda_0 X^0_0 \\ 0 \\ S_0 - \lambda \overline X_0\end{aligned}\right) \\ &- \int_0^t \Pi_{s,t} (I + D\Pi_{s,T})^{-1} d\left(\begin{aligned}  S_s - \lambda_0 X^0_s \\  0 \\ S_s - \lambda \overline X_s\end{aligned}\right). 
\end{align*}

\item[ii.] In the limit of infinite terminal penalty (when $\lambda,\lambda_0\to \infty$)  the equilibrium mean field position of minor agents, and the position of the major agent satisfy,
$$
\Xi_t \to \Upsilon_t -\Pi_{0,t} \Pi_{0,T}^{-1} (\widetilde \Upsilon_0 - D_\infty \mathcal X_0) - \int_0^t \Pi_{s,t} \Pi_{s,T}^{-1} (d\widetilde \Upsilon_s - D_\infty d\mathcal X_s),
$$
almost surely for all $t\in [0,T]$, where
$$
D_\infty = \left(\begin{aligned}
    && 1 &&  0 && 0\\
    && 0 && 0 && 0\\
    && 0 && 0 && 1
  \end{aligned}\right).
$$
When the fundamental price process $S$ is a martingale, in the limit of infinite terminal penalty, the strategies do not depend on the fundamental price and we have,
$$
 \Xi_t \to   \Pi_{0,t}\, \Pi_{0,T}^{-1}\, \left(\begin{aligned}  X^0_0 \\ 0 \\  \overline X_0\end{aligned}\right) +\int_0^t \Pi_{s,t}\, \Pi_{s,T}^{-1}\, d\left(\begin{aligned} X^0_s \\ 0 \\ \overline X_s\end{aligned}\right). 
$$

\item[iii.] In the absence of terminal penalties (when $\lambda=\lambda_0=0$), the equilibrium mean field position of minor agents, and the position of the major agent satisfy,
\begin{align}
 \Xi_t &= \Upsilon_t -\Pi_{0,t} (I + D_0\Pi_{0,T})^{-1} D_0\widetilde \Upsilon_0  - \int_0^t \Pi_{s,t} (I + D_0\Pi_{s,T})^{-1} D_0 d\widetilde \Upsilon_s . \label{myopicstack}
\end{align}

\end{itemize}
\end{Corollary}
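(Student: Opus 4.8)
All three items are specializations of Theorem~\ref{stackexplicit}, and two remarks do most of the work. First, in that formula the penalties $\lambda,\lambda_0$ occur only through the two deterministic matrix functions $s\mapsto(I+D\Pi_{s,T})^{-1}D$ and $s\mapsto(I+D\Pi_{s,T})^{-1}\Lambda$, so items (ii)--(iii) reduce to identifying the limits of these two functions. Second, since $\Phi,B$ are deterministic and $\Pi_{s,t}$ is $C^1$ in $s$ with $\partial_s\Pi_{s,t}=-\Phi(t)\Phi(s)^{-1}B(s)^{-1}$ and $\Pi_{t,t}=0$, a Stieltjes integration by parts (no covariation term, $\Pi$ being of finite variation in $s$) gives, with $\mathbf S_s:=(S_s,0,S_s)'$,
\[
\Upsilon_t=\int_0^t(\partial_s\Pi_{s,t})\,\mathbf S_s\,ds=-\Pi_{0,t}\mathbf S_0-\int_0^t\Pi_{s,t}\,d\mathbf S_s
\]
for any semimartingale $S$; if moreover $S$ is a martingale, then $\int_0^\cdot\Pi_{s,T}\,d\mathbf S_s$ is a martingale with deterministic integrand, so $\widetilde\Upsilon_0=-\Pi_{0,T}\mathbf S_0$ and $d\widetilde\Upsilon_s=-\Pi_{s,T}\,d\mathbf S_s$.

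\textbf{Items (i) and (iii).} For (i) I would insert these expressions into the formula of Theorem~\ref{stackexplicit} and collect the $\mathbf S$-terms: the coefficient of $\mathbf S_0$ is $-\Pi_{0,t}+\Pi_{0,t}(I+D\Pi_{0,T})^{-1}D\Pi_{0,T}=-\Pi_{0,t}(I+D\Pi_{0,T})^{-1}$ by the elementary identity $(I+DM)^{-1}DM=I-(I+DM)^{-1}$, which, combined with the $\Lambda\mathcal X_0$-term and the identical manipulation under the $ds$-integral, yields precisely the claimed formula, using $\mathbf S_0-\Lambda\mathcal X_0=(S_0-\lambda_0X^0_0,\,0,\,S_0-\lambda\overline X_0)'$. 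Item (iii) is immediate: at $\lambda=\lambda_0=0$ one has $\Lambda=0$ and $D=D_0$, so Theorem~\ref{stackexplicit} reduces verbatim to \eqref{myopicstack}.

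\textbf{Item (ii).} Write $D=\Lambda_\infty+D_0$ with $\Lambda_\infty:=\mathrm{diag}(\lambda_0,\lambda,\lambda)$, so that $D_0=D-\Lambda_\infty$ is the value of $D$ at zero penalties and $\Lambda_\infty^{-1}\Lambda=D_\infty$, and factor $I+D\Pi_{s,T}=\Lambda_\infty\big(\Lambda_\infty^{-1}(I+D_0\Pi_{s,T})+\Pi_{s,T}\big)$. For $s<T$ and large $\lambda,\lambda_0$ (so that $\Pi_{s,T}$ is invertible and the bracket is a small perturbation of it),
\[
(I+D\Pi_{s,T})^{-1}\Lambda_\infty=\big(\Lambda_\infty^{-1}(I+D_0\Pi_{s,T})+\Pi_{s,T}\big)^{-1}\longrightarrow\Pi_{s,T}^{-1},
\]
hence $(I+D\Pi_{s,T})^{-1}\to0$, $(I+D\Pi_{s,T})^{-1}D\to\Pi_{s,T}^{-1}$ and $(I+D\Pi_{s,T})^{-1}\Lambda\to\Pi_{s,T}^{-1}D_\infty$. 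Substituting into Theorem~\ref{stackexplicit} gives the first display of (ii): the $s=0$ terms converge directly, the $ds$-integral by pathwise dominated convergence, and the stochastic integrals by the same after an integration by parts trading $dM_s$ for $M_{s-}\,ds$ (licit since the relevant error coefficients are $C^1$ in $s$ on $[0,t]$ for $t<T$), with $t=T$ then following by continuity. The subcase ``$S$ a martingale'' follows by inserting the preliminary expressions for $\Upsilon,\widetilde\Upsilon$ into this limit: $\Pi_{0,t}\Pi_{0,T}^{-1}\widetilde\Upsilon_0=-\Pi_{0,t}\mathbf S_0$ and $\Pi_{s,t}\Pi_{s,T}^{-1}d\widetilde\Upsilon_s=-\Pi_{s,t}\,d\mathbf S_s$, so all $\mathbf S$-terms cancel $\Upsilon_t$ and only the $D_\infty\mathcal X$-terms survive, with $D_\infty\mathcal X_s=(X^0_s,0,\overline X_s)'$.

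\textbf{Main obstacle.} The only genuinely delicate point is the passage to the limit inside the integral in (ii): it rests on invertibility of $\Pi_{s,T}$ for $s<T$ and on a uniform non-degeneracy of $I+D\Pi_{s,T}$ as the penalties grow, which together furnish the uniform-on-compacts control of the matrix coefficients (and of their $s$-derivatives) needed for the dominated-convergence step; everything else is bookkeeping.
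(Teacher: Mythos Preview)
Your proposal is correct and follows essentially the same route as the paper. The paper's proof is terser: for (i) it simply says ``simplification of the proof of Theorem~\ref{stackexplicit}'' where you spell out the integration-by-parts identity $\Upsilon_t=-\Pi_{0,t}\mathbf S_0-\int_0^t\Pi_{s,t}\,d\mathbf S_s$; for (iii) both substitute $\lambda=\lambda_0=0$ directly. The only genuine difference is the algebra in (ii): the paper factors $D$ itself, writing $(I+D\Pi_{s,T})^{-1}D=(D^{-1}+\Pi_{s,T})^{-1}$ and $(I+D\Pi_{s,T})^{-1}\Lambda=(D^{-1}+\Pi_{s,T})^{-1}D^{-1}\Lambda$, then observes $D^{-1}\to0$ and $D^{-1}\Lambda\to D_\infty$, whereas you introduce the auxiliary diagonal $\Lambda_\infty=\mathrm{diag}(\lambda_0,\lambda,\lambda)$ and factor $I+D\Pi_{s,T}=\Lambda_\infty(\Lambda_\infty^{-1}(I+D_0\Pi_{s,T})+\Pi_{s,T})$. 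Both factorizations lead to the same limit and both rely on the invertibility of $\Pi_{s,T}$ for $s<T$; the paper's is slightly cleaner (one fewer matrix), but yours makes the role of the penalty parameters more transparent and you are more explicit than the paper about the dominated-convergence justification under the integral, which the paper leaves implicit.
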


\begin{proof}
The first part is a simplification of the proof of Theorem \ref{stackexplicit}. For the second part, we can rewrite:
\begin{align*}
 \Xi_t &= \Upsilon_t -\Pi_{0,t} (I + D\Pi_{0,T})^{-1} (D\widetilde \Upsilon_0 - \Lambda \mathcal X_0) \\ &- \int_0^t \Pi_{s,t} (I + D\Pi_{s,T})^{-1} (Dd\widetilde \Upsilon_s - \Lambda d\mathcal X_s)\\
 & = \Upsilon_t -\Pi_{0,t} (D^{-1} + \Pi_{0,T})^{-1} (\widetilde \Upsilon_0 - D^{-1}\Lambda \mathcal X_0) \\ &- \int_0^t \Pi_{s,t} (D^{-1} +\Pi_{s,T})^{-1} (d\widetilde \Upsilon_s - D^{-1}\Lambda d\mathcal X_s) 
\end{align*} 
and when $\lambda, \lambda_0 \longrightarrow \infty$, $D^{-1}\to 0$ and $D^{-1}\Lambda \to D_\infty$. The third part follows by direct substitution of $\lambda = \lambda_0 = 0$ into the general formula. 
\end{proof}

Interestingly, when the players don't have a terminal penalty ($\lambda = \lambda_0 = 0$), the equilibrium positions of the agents in equation \eqref{myopicstack} still contain forward looking terms, which were absent in the case of the mean field game with identical players (see Equation \eqref{strathomo} with $\lambda=0$). The presence of these terms is due to the strategic interaction of the major player with the mean field of small agents. 

In the limit of zero trading costs, the gain of the major player remains bounded in expectation, however, contrary to the case of identical players, the optimal strategy of the major agent cannot be determined uniquely from the optimization problem. Indeed, assuming that the trading cost for minor agents is zero, the equilibrium price (computed from Equation (12) in \cite{feron2020price} with $N\to \infty$) is given by
$$
P_t = S_t + a_0 \phi^0_t + a \bar \phi_t = \frac{\lambda}{a+\lambda}(a\overline X_t + \mathbb E[S_T|\mathcal F_t] + a_0 \mathbb E[\phi^0_T |\mathcal F_t]). 
$$
Substituting this expression into the optimization problem for the major player, we need to minimize the following functional:
\begin{align*}
&\mathbb E\left[\frac{\lambda}{a+\lambda}\int_0^T \dot\phi^0_t(a\overline X_t + \mathbb E[S_T|\mathcal F_t] + a_0 \mathbb E[\phi^0_T |\mathcal F_t])dt + \frac{\lambda_0}{2}(\phi^0_T - X^0_T)^2 \right]\\
& = \mathbb E\left[\frac{\lambda}{a+\lambda}(a\phi^0_T \overline X_T + \phi^0_T S_T + a_0(\phi^0_T)^2 )+  \frac{\lambda_0}{2}(\phi^0_T - X^0_T)^2 \right],
\end{align*}
where the equality follows, in particular, from the martingale property of $\overline X_T$. Since the expression to be minimized only depends on the terminal value $\phi^0_T$ of the major agent's position, any strategy with the optimal terminal value will satisfy the condition of optimality: the Stackelberg equilibrium will not be unique in this case.

 To finish this section, we provide the explicit form of the strategy of the minor agents.
\begin{Corollary}[Minor agent strategy]\label{minorbis}
Under assumption \ref{stackass}, the optimal generic minor agent position $\phi^*$ is given by: 
$$  \phi^*_t = \int_0^t \widetilde \Delta_{s,t} \frac{\lambda d\check  X_s}{1+\lambda\widetilde\Delta_{s,T}} + \widetilde \Delta_{0,t}\frac{\lambda \check X_0}{1+\lambda\widetilde\Delta_{0,T}}  + \bar{\phi}^*_t$$
where $\bar \phi^*$ is the optimal aggregate position of the minor agents, as given by Theorem \ref{stackexplicit}. 
\end{Corollary}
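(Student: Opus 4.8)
The plan is to produce an explicit candidate for $\phi^*$, namely $\phi^{*}_t := \widehat\psi_t + \bar\phi^*_t$ with
$\widehat\psi_t := \widetilde\Delta_{0,t}\frac{\lambda\check X_0}{1+\lambda\widetilde\Delta_{0,T}} + \int_0^t \widetilde\Delta_{s,t}\frac{\lambda\, d\check X_s}{1+\lambda\widetilde\Delta_{s,T}}$, and to check that it satisfies the characterization of Proposition \ref{minor} for the equilibrium pair $(\bar\phi^*,\phi^{0*})$ furnished by Theorem \ref{stackexplicit}; since the minor agent's optimal strategy is unique (the functional \eqref{minorobj} is strictly concave in the control when $\alpha>0$), this forces $\phi^*$ to be exactly this process. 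Concretely, by Proposition \ref{minor} it suffices to exhibit an $\mathbb F$-martingale $Y$ with $Y_T=\lambda(\phi^*_T-X_T)$ and $\dot\phi^*_t = -\big(Y_t+S_t+a\bar\phi^*_t+a^0\phi^{0*}_t\big)/\alpha(t)$. The input I would use is that $\bar\phi^*$ itself satisfies the conditioned version of this relation — this is the third line of \eqref{adjointp}, obtained as $\bar\phi^*_t=\mathbb E[\,\cdot\mid\mathcal F^0_t]$ of the minor best response in the proof of Lemma \ref{stacklem} — so there is an $\mathbb F^0$-martingale $\overline Y$ with $\overline Y_T=\lambda(\bar\phi^*_T-\overline X_T)$ and $\dot{\bar\phi}^*_t = -\big(\overline Y_t+S_t+a\bar\phi^*_t+a^0\phi^{0*}_t\big)/\alpha(t)$.

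The computational part is routine. Differentiating $\widehat\psi$ using $\partial_t\widetilde\Delta_{s,t}=1/\alpha(t)$ and $\widetilde\Delta_{t,t}=0$ gives $\dot{\widehat\psi}_t=-\widehat W_t/\alpha(t)$ where $\widehat W_t:=-\frac{\lambda\check X_0}{1+\lambda\widetilde\Delta_{0,T}}-\int_0^t\frac{\lambda\, d\check X_s}{1+\lambda\widetilde\Delta_{s,T}}$, and $\widehat W$ is an $\mathbb F$-martingale because the integrand is deterministic and $\check X=X-\overline X$ is an $\mathbb F$-martingale under Assumption \ref{mfgass}. Adding the equation for $\bar\phi^*$, I get $\dot\phi^*_t=-\big((\widehat W_t+\overline Y_t)+S_t+a\bar\phi^*_t+a^0\phi^{0*}_t\big)/\alpha(t)$, so the candidate adjoint is $Y:=\widehat W+\overline Y$. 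The terminal condition follows from the elementary identity $\frac{\lambda\widetilde\Delta_{s,T}}{1+\lambda\widetilde\Delta_{s,T}}-1=-\frac{1}{1+\lambda\widetilde\Delta_{s,T}}$, which, together with $\check X_T=\check X_0+\int_0^T d\check X_s$, yields $\widehat W_T=\lambda(\widehat\psi_T-\check X_T)$ and hence $Y_T=\widehat W_T+\overline Y_T=\lambda(\widehat\psi_T-\check X_T)+\lambda(\bar\phi^*_T-\overline X_T)=\lambda(\phi^*_T-X_T)$, using $X=\overline X+\check X$. (One also checks $\phi^*_0=\widehat\psi_0+\bar\phi^*_0=0$, consistent with the flat initial position, since $\bar\phi^*_0=0$ by Theorem \ref{stackexplicit}.)

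The single delicate point — and the main obstacle — is that $Y=\widehat W+\overline Y$ must be an $\mathbb F$-martingale, i.e.\ the $\mathbb F^0$-martingale $\overline Y$ from the equilibrium decomposition must remain a martingale in the finer filtration $\mathbb F$ of the minor agent. This is exactly the same compatibility between $\mathbb F^0$ and $\mathbb F$ that is imposed on $\overline X$ in Assumption \ref{mfgass} (the minor agent's private forecast noise does not anticipate the common and major-agent quantities), and it belongs to the modelling framework of \cite{feron2020price}; I would invoke it directly, noting $\overline Y_t=\mathbb E[Y^{\mathrm m}_t\mid\mathcal F^0_t]$ where $Y^{\mathrm m}$ is the $\mathbb F$-martingale attached to the minor best response by Proposition \ref{minor}. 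Equivalently, one can bypass the candidate: subtracting the two relations above shows that $\psi:=\phi^*-\bar\phi^*$ solves the linear system $\dot\psi_t=-W_t/\alpha(t)$, $\psi_0=0$, $W_T=\lambda(\psi_T-\check X_T)$ with $W:=Y^{\mathrm m}-\overline Y$ an $\mathbb F$-martingale, which is precisely the ``private component'' subsystem appearing (and solved uniquely by $\widehat\psi$) in the first step of the proof of Theorem \ref{theorem_mfg}, whence $\phi^*=\widehat\psi+\bar\phi^*$. I expect no difficulty in the remaining steps.
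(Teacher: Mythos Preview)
Your proposal is correct and essentially coincides with the paper's proof. The paper proceeds exactly along your alternative route at the end: it sets $\check\phi^*=\phi^*-\bar\phi^*$, $\check X=X-\overline X$, $\check Y=Y-\overline Y$, subtracts the minor-agent equation from its $\mathbb F^0$-conditioned version to obtain the linear system $\alpha(t)\dot{\check\phi}^*_t=\check Y_t$ (up to a sign convention), $\check Y_T=\lambda(\check\phi^*_T-\check X_T)$, and then solves this explicitly to recover the $\widetilde\Delta$ integrals. Your first presentation (propose $\widehat\psi$ and verify the characterization of Proposition~\ref{minor}) is simply the same computation read backwards.

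You are also more explicit than the paper on the one genuine technical point: for $\check Y=Y-\overline Y$ (equivalently your $Y=\widehat W+\overline Y$) to be an $\mathbb F$-martingale, one needs the $\mathbb F^0$-martingale $\overline Y$ to remain a martingale in the larger filtration $\mathbb F$. The paper asserts this without comment (``it follows that $\check Y$ is an $\mathbb F$-martingale''); your identification of this as a filtration-compatibility hypothesis parallel to the one imposed on $\overline X$ in Assumption~\ref{mfgass} is the right diagnosis.
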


\begin{proof}

Let $\check\phi^*_t = \phi^{*}_t - \bar \phi^*_t$, $\check X_t = X_t - \overline X_t$ and $\check Y_t:= Y_t -
\overline Y_t$. Then, from the explicit form of $Y$ and $\overline Y$ in Proposition \ref{minor} it follows that $\check Y$ is an $\mathbb F$-martingale
and satisfies
$$
\check Y_T = -\lambda(\check \phi^{*}_T -\check X_T),\qquad
\check Y_t = \alpha(t) \dot{\check \phi}^{*}_t.
$$
Then,
\begin{align}
\check\phi^*_t = \int_0^t \frac{\check Y_s}{\alpha(s)} ds,\label{phiY}
\end{align}
and by the martingale property,
$$
Y_t = -\lambda \mathbb E[\check \phi^{*}_T -\check X_T|\mathcal F_t] = -\lambda \int_0^t \frac{\check Y_s}{\alpha(s)} ds - \lambda \check Y_t \int_t^T \frac{ds}{\alpha(s)} + \lambda \check X_t. 
$$
Solving this linear equation for $\check Y$ then substituting into \eqref{phiY}, we get the result. 
\end{proof}

%%%%%%%%%%%%%%%%%%%%%%%%%%%%%%%%%%%%%%%%%%
\section{Approximate Nash equilibrium in the $N$-player Stackelberg game}\label{epsnashstackelberg}

In this section, we derive the $\epsilon$-Nash approximation for the Stackelberg game. In the present leader-follower setting, we allow the minor agents to change their strategies when the major agent deviates from her optimal one.

{

Since we would like to study the rate of convergence as $N\to \infty$, we assume that there is a major player and an infinity of minor players replacing the generic agent. Their demand forecasts are respectively given by $X_t^i$, $i=0,\dots,\infty$, $t \in [0,T]$. The private demand forecasts of all agents are defined on the same probability space. We therefore impose the following assumption.
\begin{Assumption}${}$\label{samespace.ass}
\begin{itemize}
\item The process $S$ is square integrable and adapted to the filtration $\mathbb F^0$. 
\item The demand forecast $X^0$ of the major agent is a square integrable $\mathbb F^0$-martingale. 
\item The processes $(X^i)_{i=1}^\infty$ are square integrable $\mathbb F$-martingales.
\item There exists a square intergrable $\mathbb F$-martingale $\overline X$, such that for all $i\geq 1$, and all $t\in[0,T]$, almost surely, $\mathbb E[X^i_t|\mathcal F^0_t]=\overline X_t$.
\item The processes $(\check X^i)_{i=1}^\infty$ defined by $\check X^i_t = X^i_t - \overline X_t$ for $t\in[0,T]$, are orthogonal square integrable $\mathbb F$-martingales, such that the expectation $\mathbb E[(\check X^i_T)^2]$ does not depend on $i$. 
\end{itemize}
\end{Assumption}

The strategy $(\dot \phi^i)$ of agent $i=1,\dots,\infty$ is said to be admissible if it is $\mathbb F$-adapted and square integrable; the strategy $(\dot\phi^0)$ of the major agent is admissible if it is $\mathbb F^0$-adapted and square integrable. 
For a fixed $N\geq 1$, we denote: 
\begin{align*}
    &P^N(\phi^0_t,...,\phi^N_t) = S_t + {a} \overline \phi^N_t + {a}^0 \phi_t^0\\ & P^{MF}(\phi^0_t,\bar \phi_t) = S_t+{a} \bar \phi_t+ {a}^0 \phi_t^0,
\end{align*}
where $\bar \phi^N_t = \frac{1}{N}\sum_{i = 1}^N \phi^i_t$ is the average position of the minor agents.
And we define in the $N$-player game, the objective functions for {the major agent:
\begin{multline}\
       J^{N,0}(\phi^0, \phi^{-0}) \\:= - \mathbb{E}\left[\int_{0}^{T}\left\{\frac{\alpha_0(t)}{2}(\Dot{\phi^0_t})^{2}+\Dot{\phi^0_{t}}P^N(\phi^0_t,\dots,\phi^N_t) \right\}dt+\frac{\lambda_0}{2}(\phi^0_{T}- X^0_{T})^2\right], 
\end{multline}
and for the minor agents $i=1,\dots,N$}:
\begin{equation}\
       J^{N,i}(\phi^i, \phi^{-i}) := - \mathbb{E}\left[\int_{0}^{T}\left\{\frac{\alpha(t)}{2}(\Dot{\phi^i_t})^{2}+\Dot{\phi^i_{t}}P^N(\phi^0_t,\dots,\phi^N_t) \right\}dt+\frac{\lambda}{2}(\phi^i_{T}- X^i_{T})^2\right],
\end{equation}
{as well as the objective function for the minor agents $i = 1, \dots, N$, in the mean field setting:}
\begin{equation}
       J^{MF}(\phi^i, {\bar\phi}, \phi^0) := - \mathbb{E}\left[\int_{0}^{T}\left\{\frac{\alpha(t)}{2}(\Dot{\phi^i_t})^{2}+\Dot{\phi^i_{t}}P^{MF}(\phi^0_t,\bar\phi_t) \right\}dt+\frac{\lambda}{2}(\phi^i_{T}- X^i_{T})^2\right].
\end{equation}

{We next provide a definition of the $\epsilon$-Nash equilibrium in the present Stackelberg setting. As mentioned above, the deviations of the major and minor agents must be treated differently: when the major agent deviates, we allow the minor agents to adjust their strategies to respond optimally to the new strategy of the major agent. We say that the minor agent strategies $\phi^1,\dots,\phi^N$ are an optimal response to the major agent strategy $\phi^0$ if for every $i=1,\dots,N$ and for every admissible minor agent strategy $\tilde\phi^i$,
$$
J^{N,i}(\tilde \phi^i,\phi^{-i}) \leq J^{N,i}(\phi^i,\phi^{-i}). 
$$
\begin{Definition}[Stackelberg $\varepsilon$-Nash equilibrium]\label{stackepsnash}
We say that $(\phi^{i*}_t)_{t \in [0,T],0 \leq i\leq N}$ is an $\epsilon$-Nash equilibrium for the N-player game if these strategies are admissible and the following holds.
\begin{itemize}
    \item[i.] \textbf{Deviation of a minor player:}\quad  For any other admissible strategy $\phi^{i}$ for the minor player $i$, $i=1,\dots,N$, 
    $$
    J^{N,i}(\phi^{i},\phi^{-i*})-\varepsilon\leq J^{N,i}( \phi^{i*},\phi^{-i*}).
    $$
    \item[ii.] \textbf{Deviation of the major player:}\quad For any other set of admissible strategies $(\phi^i),i=0,\dots,N$, such that $\phi^1,\dots,\phi^N$ are optimal responses of minor players to the major player strategy $\phi^0$, we have,
\begin{align*}
    J^{N,0}( \phi^0, \phi^{-0})-\varepsilon\leq J^{N,0}( \phi^{0*}, \phi^{-0*}).
    \end{align*}
\end{itemize}
\end{Definition}}

Our definition of $\varepsilon$-Nash equilibrium is different from the one in \cite{carmona2016probabilistic}: while the latter paper assumes that the major player deviates from her strategy unilaterally (see Definition 4.2), we allow the minor players to respond to the deviation of the major player, in agreement with the leader-follower nature of the game. In addition, in \cite{carmona2016probabilistic}, an a priori bound on the $L
^p$-norm of the new strategy of the major agent is required to establish Theorem 4.1, whereas no such bound is needed in our setting. 

\begin{Proposition}\label{epsnash} 
Assume that the strategies of the $N$ minor agents are given by
$$  
\phi^{i*}_t = \int_0^t \widetilde \Delta_{s,t} \frac{\lambda d\check  X^i_s}{1+\lambda\widetilde\Delta_{s,T}} + \widetilde \Delta_{0,t}\frac{\lambda \check X^i_0}{1+\lambda\widetilde\Delta_{0,T}}  + \overline{\phi}^*_t,$$
where $\bar \phi$ is the third component of the mean field equilibrium defined in Theorem \ref{stackexplicit}. Assume that the strategy of the major agent is given by Theorem \ref{stackexplicit} as well. 
Let assumption \ref{samespace.ass} hold true. Then there exists a constant $C<\infty$, which does not depend on $N$, such that 
 these strategies form an $\varepsilon$-Nash equilibrium of the N-player game with $\varepsilon = \frac{C}{N^{1/2}}$.  
\end{Proposition}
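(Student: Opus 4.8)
The plan is to bound separately the two deviation gaps in Definition~\ref{stackepsnash}, showing each is $O(N^{-1/2})$. The common mechanism is that the mean field quantities $\bar\phi^*$, $\phi^{0*}$ and the price $P^{MF}$ differ from their $N$-player counterparts only through the empirical average $\bar\phi^{N*}_t = \frac1N\sum_{i=1}^N\phi^{i*}_t$ versus $\bar\phi^*_t$. With the prescribed minor strategies, $\phi^{i*}_t - \bar\phi^*_t = \int_0^t\widetilde\Delta_{s,t}\frac{\lambda\,d\check X^i_s}{1+\lambda\widetilde\Delta_{s,T}} + \widetilde\Delta_{0,t}\frac{\lambda\check X^i_0}{1+\lambda\widetilde\Delta_{0,T}}$ depends only on $\check X^i$, so $\bar\phi^{N*}_t - \bar\phi^*_t = \frac1N\sum_{i=1}^N\left(\int_0^t\widetilde\Delta_{s,t}\frac{\lambda\,d\check X^i_s}{1+\lambda\widetilde\Delta_{s,T}} + \widetilde\Delta_{0,t}\frac{\lambda\check X^i_0}{1+\lambda\widetilde\Delta_{0,T}}\right)$. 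First I would use Assumption~\ref{samespace.ass}: the $\check X^i$ are orthogonal square-integrable $\mathbb F$-martingales with $\mathbb E[(\check X^i_T)^2]$ independent of $i$, so by It\^o isometry / orthogonality the cross terms vanish in expectation and $\mathbb E\big[\sup_{t\le T}(\bar\phi^{N*}_t-\bar\phi^*_t)^2\big] \le \frac{C_1}{N}$ for a constant $C_1$ depending only on $\lambda$, $\alpha(\cdot)$ and $\mathbb E[(\check X^1_T)^2]$ (Doob's inequality handles the supremum). Hence $\|P^N-P^{MF}\|_{L^2([0,T]\times\Omega)} \le a\,\|\bar\phi^{N*}-\bar\phi^*\| \le C_2 N^{-1/2}$ whenever all agents play their prescribed strategies.

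For the minor-player deviation (part i), fix $i$ and an arbitrary admissible $\phi^i$. Since $\phi^{i*}$ is by Corollary~\ref{minorbis} the exact optimizer of $J^{MF}(\cdot,\bar\phi^*,\phi^{0*})$, we have $J^{MF}(\phi^i,\bar\phi^*,\phi^{0*}) \le J^{MF}(\phi^{i*},\bar\phi^*,\phi^{0*})$, so it suffices to control $|J^{N,i}(\phi^i,\phi^{-i*}) - J^{MF}(\phi^i,\bar\phi^*,\phi^{0*})|$ and $|J^{N,i}(\phi^{i*},\phi^{-i*}) - J^{MF}(\phi^{i*},\bar\phi^*,\phi^{0*})|$. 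The difference in objective comes only from replacing $P^{MF}$ by $P^N$ in the linear term $\dot\phi^i_t P_t$; here one must be slightly careful because an arbitrary deviating $\phi^i$ enters $\bar\phi^{N*}$ through its own $\frac1N$ weight. Writing $\bar\phi^{N}_t = \bar\phi^{N*}_t + \frac1N(\phi^i_t - \phi^{i*}_t)$, the $O(1/N)$ self-term contributes $\frac aN\mathbb E\int_0^T\dot\phi^i_t(\phi^i_t-\phi^{i*}_t)\,dt$, which is \emph{not} obviously small for unbounded deviations — this is the one place a coercivity argument is needed: by the quadratic terms $\frac{\alpha(t)}2(\dot\phi^i_t)^2$ and $\frac\lambda2(\phi^i_T-X^i_T)^2$ in $J^{N,i}$, any $\phi^i$ with $J^{N,i}(\phi^i,\phi^{-i*}) \ge J^{N,i}(\phi^{i*},\phi^{-i*})$ (the only case that matters) has $\mathbb E\int_0^T(\dot\phi^i_t)^2dt$ bounded by a constant independent of $N$, so the self-term is $O(1/N)$. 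The remaining cross term $\frac aN\sum_{j\neq i}$-type contribution and the $a(\bar\phi^{N*}-\bar\phi^*)$ term are handled by Cauchy--Schwarz against the $L^2$-bound $C_2N^{-1/2}$ together with the uniform $L^2$-bound on $\dot\phi^i$. This yields part~i with $\varepsilon = O(N^{-1/2})$.

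For the major-player deviation (part ii), let $\phi^0$ be any admissible major strategy and $\phi^1,\dots,\phi^N$ optimal responses of the minor players to $\phi^0$. By Proposition~\ref{minor} applied pathwise (with $\widetilde S = S + a_0\phi^0$), each $\phi^i$ has the representation $\dot\phi^i_t = -(Y^i_t + S_t + a\bar\phi^N_t + a_0\phi^0_t)/\alpha(t)$ with $Y^i$ an $\mathbb F$-martingale, $Y^i_T = \lambda(\phi^i_T - X^i_T)$; averaging and invoking the orthogonality of the $\check X^i$ exactly as in \cite{feron2020price}'s convergence argument gives $\mathbb E[\sup_t(\bar\phi^N_t - \bar\phi_t)^2] \le C_3/N$, where $\bar\phi$ is the mean field aggregate position corresponding to the major strategy $\phi^0$ (the second equation of the system in Proposition~\ref{major}, or rather the minor-response map of Lemma~\ref{stacklem}). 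Then $|J^{N,0}(\phi^0,\phi^{-0}) - J^{MF,0}(\phi^0,\bar\phi)| \le a\,\|\dot\phi^0\|\cdot\|\bar\phi^N - \bar\phi\| \le C_4\|\dot\phi^0\|\,N^{-1/2}$; restricting as before to deviations with $J^{N,0}(\phi^0,\phi^{-0}) \ge J^{N,0}(\phi^{0*},\phi^{-0*})$ gives via the $\frac{\alpha_0}2(\dot\phi^0)^2$ term a uniform bound on $\|\dot\phi^0\|$, hence $|J^{N,0}(\phi^0,\phi^{-0}) - J^{MF,0}(\phi^0,\bar\phi)| \le C_5 N^{-1/2}$ and likewise at the optimum. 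Combining with $J^{MF,0}(\phi^0,\bar\phi) \le J^{MF,0}(\phi^{0*},\bar\phi^*)$ from Definition~\ref{stackdef}(ii) (the pair $(\phi^0,\bar\phi)$ satisfying condition i of Lemma~\ref{stacklem} because the $\phi^i$ are optimal responses) closes part~ii.

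The main obstacle is the bookkeeping in part~ii: one must verify that the minor players' optimal-response map in the $N$-player game converges (at rate $N^{-1/2}$) to the mean field response map encoded in Lemma~\ref{stacklem}, uniformly over admissible major strategies $\phi^0$, and that the resulting $\bar\phi$ is exactly the one against which $\phi^{0*}$ was shown optimal. This requires redoing the fixed-point / linear-ODE analysis behind Lemma~\ref{stacklem} and Proposition~\ref{major} at the $N$-player level and tracking the $O(1/N)$ self-interaction terms; the coercivity reduction to bounded-energy deviations (legitimate since we only need the inequality when the deviation is at least as good as the equilibrium) is what prevents these estimates from blowing up for unbounded strategies, and is the reason no a priori $L^p$-bound on the major deviation is needed.
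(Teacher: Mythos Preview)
Your treatment of part~i (minor player deviation) is essentially the same as the paper's, which defers this case to the identical-agent argument in \cite{feron2020price}; the self-interaction $O(1/N)$ term and the coercivity reduction you describe are exactly the standard mechanism.

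For part~ii (major player deviation), however, your coercivity step has a genuine gap. You claim that restricting to deviations $\phi^0$ with $J^{N,0}(\phi^0,\phi^{-0})\ge J^{N,0}(\phi^{0*},\phi^{-0*})$ yields, ``via the $\tfrac{\alpha_0}{2}(\dot\phi^0)^2$ term,'' a uniform bound on $\|\dot\phi^0\|$. This does not follow: the linear part of $J^{N,0}$ contains $\dot\phi^0_t\cdot a\bar\phi^N_t$, and since the minor players' optimal responses $\phi^1,\dots,\phi^N$---hence $\bar\phi^N$---depend linearly on $\phi^0$ through the FBSDE of Proposition~\ref{minor}, this cross term is potentially of order $\|\dot\phi^0\|^2$ with a sign that can defeat the quadratic trading cost. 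Relatedly, the bound $\mathbb E\int_0^T(\bar\phi_t-\bar\phi^N_t)^2dt\le C_3/N$ you invoke from \cite{feron2020price} is \emph{not} uniform in $\phi^0$: Proposition~1 there produces a right-hand side $\tfrac{C}{N^2}\mathbb E\int_0^T(S_s+a^0\phi^0_s)^2ds+\tfrac{c}{N}$, which grows with $\|\phi^0\|$. So your argument is circular: the energy bound is needed for the $N^{-1/2}$ rate, and the $N^{-1/2}$ rate (together with mean-field optimality) is needed for the energy bound.

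The paper breaks this circularity by a structural device in its Step~2 that your proposal does not supply. Rather than attempting to bound $\|\dot\phi^0\|$ from $J^{N,0}$ alone, it uses the minor-player optimality directly: comparing each $\phi^i$ to the zero strategy gives
\[
\mathbb E\Big[\int_0^T\Big\{\tfrac{\alpha(t)}{2}(\dot\phi^i_t)^2+\dot\phi^i_t P^N_t\Big\}dt+\tfrac{\lambda}{2}(\phi^i_T-X^i_T)^2\Big]\le C,
\]
and after averaging over $i$, multiplying by $a/a_0$, and adding to the major player's cost, the problematic cross terms $\dot\phi^0_t(a\bar\phi^N_t+a_0\phi^0_t)$ and $\tfrac{a}{a_0}\dot{\bar\phi}^N_t(a\bar\phi^N_t+a_0\phi^0_t)$ combine via integration by parts into the nonnegative quantity $\tfrac{1}{2a_0}(a\bar\phi^N_T+a_0\phi^0_T)^2$. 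This is what delivers the uniform estimate $J^{N,0}(\phi^0,\phi^{-0})\le C-\tfrac{\bar\alpha_0}{2}\|\dot\phi^0\|^2+C'\|\dot\phi^0\|$, from which the desired coercivity---and hence Step~1's applicability---follows. You correctly flag the difficulty in your final paragraph, but the specific linear combination with coefficient $a/a_0$ and the resulting completion of the square are the missing ingredients.
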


\begin{Remark}{
The $\varepsilon$-Nash equilibrium described in Proposition \ref{epsnash} approximates the $N$-player equilibrium in the complete information setting (where every player observes the others' actions), but its implementation for each agent only requires the knowledge of the common information $\mathbb F^0$ as well as the agent's individual forecast.} 
\end{Remark}

\begin{proof}
We need to show conditions i. and ii. of Definition \ref{stackepsnash}. Condition i.~is shown in the same way as in the case of homogeneous players (see proof of Proposition 2 in \cite{feron2020price}). We therefore focus on condition ii.
Assume that all agents change their strategies to new ones $\phi^{0},\dots, \phi^{N}$, such that $\phi^1,\dots,\phi^N$ are optimal responses to $\phi^0$.  Let $\bar \phi$ be the optimal "mean field" response to the major agent strategy $\phi^0$. 

\paragraph{\textbf{Step 1.}} We first suppose that there exists a constant $ A>0$, such that  $\mathbb{E}\left[\int_0^T (\dot \phi_t^0)^2dt\right]< A$. 

By Proposition 1 in \cite{feron2020price}, for some constants $c$ and $C$ which do not depend on $N$, and may change from line to line,
$$\mathbb{E}\left[ \int_0^T (\bar \phi_t-\bar \phi^N_t)^2dt\right]  \leq \frac{C}{N^2} \mathbb{E}\left[\int_0^T (S_s+ a^0\phi^0_s)^2ds\right]+ \frac{c}{N}, $$
and by our assumption,  
\begin{align*}
&\mathbb{E}\left[\int_0^T (S_s+ a^0\phi^0_s)^2ds\right] \\ &= \mathbb{E}\left[\int_0^T S^2_sds\right] + 2a^0\mathbb{E}\left[\int_0^T S^2_sds\right]^{\frac{1}{2}}\mathbb{E}\left[\int_0^T (\phi^0_s)^2ds\right]^{\frac{1}{2}} + \mathbb{E}\left[\int_0^T (a^0\phi^0_s)^2ds\right]\\ &< C(1+A).
\end{align*}
Thus using Cauchy-Schwartz inequality,
\begin{align*}
&J^{N,0}(\phi^{0},\phi^{-0}) - J^{N,0}(\phi^{0*},\phi^{-0*}) \\
&=  J^{N,0}(\phi^{0},\phi^{-0}) - J^{MF}(\phi^{0},\bar\phi)+J^{MF}(\phi^{0},\bar\phi) - J^{MF}(\phi^{0*},\bar\phi^*) \\&\qquad\qquad+ J^{MF}(\phi^{0^*},\bar\phi^*) - J^{N,0}(\phi^{0*},\phi^{-0*}) \\
&\leq J^{N,0}(\phi^{0},\phi^{-0}) - J^{MF}(\phi^{0},\bar\phi^0) + J^{MF}(\phi^{0^*},\bar\phi^{*}) - J^{N,0}(\phi^{0*},\phi^{-0*})\\
& \leq a \Big\{ \mathbb E\left[\int_0^T (\dot\phi^{0}_t)^2 dt\right]^{\frac{1}{2}}\mathbb E\left[\int_0^T (\bar \phi_t - \bar \phi^{N}_t)^2 dt\right]^{\frac{1}{2}} \\ &\qquad\qquad+ \mathbb E\left[\int_0^T (\dot\phi^{0*}_t)^2 dt\right]^{\frac{1}{2}}\mathbb E\left[\int_0^T (\bar \phi^*_t - \bar \phi^{N*}_t)^2 dt\right]^{\frac{1}{2}}\Big\}\\
& = \mathcal{O}\left(N^{-\frac{1}{2}}\right)
\end{align*}

\paragraph{\textbf{Step 2.}}
{Letting $\bar \alpha_0 = \min_{0\leq t\leq T} \alpha_0(t)$, and $\bar \alpha = \min_{0\leq t\leq T} \alpha(t)$, we have, by definition, 
\begin{align}
J^{N,0}(\phi^{0},\phi^{-0}) &\leq - \mathbb E\left[\int_0^T\frac{\bar  \alpha_0}{2}(\dot\phi^0_t)^2 + \dot\phi^0_t (S_t + a \bar \phi^{N}_t +a_0\phi^{0}_t )dt + \frac{\lambda_0}{2}(\phi^0_T - X_T)^2 \right]\notag 
\end{align}
On the other hand, since $\phi^{i}$ for $i=1,\dots,N$ are optimal responses to $\phi^0$, we get that
\begin{multline*}
\mathbb E\left[\int_0^T\left\{\frac{\alpha(t)}{2} (\dot\phi^i_t)^2 + \dot\phi^i_t (S_t + a \bar \phi^{N}_t +a_0\phi^{0}_t )\right\}dt + \frac{\lambda}{2}(\phi^i_T - X^i_T)^2 \right] \\ \leq \frac{\lambda}{2} \mathbb E[(X^i_T)^2] \leq C,
\end{multline*}
where $C<\infty$ is defined by $C:=\max_i \frac{\lambda}{2} \mathbb E[(X^i_T)^2]$. Summing up the above inequality over $i=1,\dots,N$, dividing by $N$ and using Jensen's inequality, we get 
$$
\mathbb E\left[\int_0^T\left\{\frac{\bar\alpha}{2} (\dot{\bar\phi}^N_t)^2 + \dot{\bar\phi}^N_t (S_t + a \bar \phi^{N}_t +a_0\phi^{0}_t )\right\}dt + \frac{\lambda}{2}(\bar\phi^N_T - \overline X^N_T)^2 \right] \leq C.
$$
Multiplying this inequality by $\frac{a}{a_0}$, adding it to the first one, and using integration by parts, we finally get
\begin{align*}
J^{N,0}(\phi^{0},\phi^{-0}) &\leq C - \mathbb E\Bigg[\int_0^T\left\{\frac{\bar \alpha_0}{2} (\dot\phi^0_t)^2 + \frac{a\bar\alpha}{2a_0} (\dot{\bar\phi}^N_t)^2+ \frac{S_t}{a_0}(a_0 \dot\phi^0_t + a \dot{\bar\phi}^N_t)\right\}dt \\ & + \frac{1}{2a_0}(a \bar \phi^{N}_T +a_0\phi^{0}_T )^2 + \frac{\lambda_0}{2}(\phi^0_T - X_T)^2+\frac{a\lambda}{2a_0}(\bar\phi^N_T - \overline X^N_T)^2 \Bigg]\notag \\
&\leq C - \mathbb E\left[\int_0^T\frac{\bar \alpha_0}{2} (\dot\phi^0_t)^2 dt \right] +\mathbb E\left[\int_0^T S_t^2 dt\right]^{\frac{1}{2}} \mathbb E\left[\int_0^T (\dot\phi^0_t)^2 dt\right]^{\frac{1}{2}}\\ &- \mathbb E\left[ \int_0^T \frac{a\bar\alpha}{2a_0} (\dot{\bar\phi}^N_t)^2 dt\right]+ \frac{a}{a_0}\mathbb E\left[\int_0^T S_t^2 dt\right]^{\frac{1}{2}} \mathbb E\left[\int_0^T (\dot{\bar\phi}^N_t)^2 dt\right]^{\frac{1}{2}}\\
&\leq C - \mathbb E\left[\int_0^T\frac{\bar \alpha_0}{2} (\dot\phi^0_t)^2 dt \right] +\mathbb E\left[\int_0^T S_t^2 dt\right]^{\frac{1}{2}} \mathbb E\left[\int_0^T (\dot\phi^0_t)^2 dt\right]^{\frac{1}{2}}\\ &\qquad\qquad+  \frac{a}{\bar \alpha a_0}\mathbb E\left[\int_0^T S_t^2 dt\right]
\end{align*}
Thus, if 
$$
\mathbb E\left[\int_0^T (\dot\phi^0_t)^2dt\right] > A, 
$$
for $A$ sufficiently large (but not depending on $N$), then from the above estimate it follows that 
$$
J^{N,0}(\phi^{0},\phi^{-0}) \leq J^{N,0}(\phi^{0*},\phi^{-0*}), 
$$
as well. 
}

\end{proof}}

%%%%%%%%%%%%%%%%%%%%%%%%%%%%%%%%
 \section{Numerical illustration}\label{empirical}

In this section, our objective is to illustrate the theoretical results presented in sections \ref{extension} and \ref{epsnashstackelberg} with numerical simulations. We analyze the role of the major producer in the market and its impact on price characteristics such as volatility and price-forecast correlation, and compare this situation to the homogeneous agent setting studied in \cite{feron2020price}. Some comparisons with the empirical market characteristics are also performed, but we refer the reader to \cite{feron2020price} and other papers cited therein for a more detailed description of intraday electricity markets and their empirical features. As empirical analysis in \cite{feron2020price} are led on actual wind infeed forecasts, we will consider production forecasts here instead of demand forecasts as it is the case in the rest of the paper. Throughout, we consider that the production forecasts are the differences between actual production forecasts and the agents' positions in the market at time 0. Therefore, the initial values $X^i_0$, $i=0, \dots, N$ will be set to 0.

\paragraph{\textbf{Model specification}}
We now define the dynamics for the fundamental price and for the production forecasts used in the simulations and specify the parameter values. The objective being to illustrate the model, the majority of the parameters are not precisely estimated, but are given ad hoc plausible values.

The evolution of the fundamental price is described as follows: 
\begin{equation}
    dS_t = \sigma^S dW_t
\end{equation}
where $\sigma^s$ is a constant and $(W_t)_{t \in [0,T]}$ is Brownian motion. 
We also assume that the liquidity functions $\alpha(.)$ and $\alpha_0(.)$ have a specific form given by
\begin{eqnarray}
\label{alpha}
    \alpha(t) & = & \alpha \times (T-t)+ \beta, \quad \forall t \in [0, T] \\
    \alpha_0(t) & = & \alpha_0 \times (T-t)+ \beta_0, \quad \forall t \in [0, T]
\end{eqnarray}
where $\alpha, \beta$, $\alpha_0$ and $\beta_0$ are strictly positive constants.  The liquidity functions are thus decreasing with time. This assumption relies on the fact that the market becomes more liquid as we get closer to the delivery time and it is less costly to trade when the market is liquid.

To simulate production forecasts we assume the following dynamics: 
\begin{eqnarray}
 d \bar X_{t} & = & \bar\sigma d\bar B_t \\
 d \check X^0_{t} & = & \sigma^0 dB^0_{t},\\
 d \check X^i_{t} & = & \sigma^X dB^i_{t}, \qquad i \in \{1, \dots, N\}
\end{eqnarray} 
where $\bar \sigma$, $\sigma^0$ and $\sigma^X$ are  constants and $(\bar B_t)_{t \in [0,T]}$, $(B^i_{t})_{t \in [0,T]}$,  $i \in \{0, \dots, N\}$ are independent Brownian motions, also independent from $(W_t)_{t \in [0,T]}$.

In this illustration, we choose the same parameters for the dynamics of the common and the individual production forecasts, as well as the forecast of the major agent. The common volatility is calibrated to wind energy forecasts in Germany over January 2015 during the last quotation hour, by using the classical volatility estimator 
\begin{equation}\label{volforecast}
   \bar \sigma = \sigma^0 = \sigma^X = \hat{\sigma} = \frac{\sqrt{\Delta t}}{n' - 1}\sum_{i=1}^{n'}Y_i^2
 \end{equation}
with $\Delta t$ the time step between two observations, $Y_i = X_{t_i} - X_{t_{i-1}}$ the increment between two successive observations and $n'$ the total number of observed increments. 
As the forecasts are updated every 15 minutes, there are three daily variations during the last hour of forecasts from the 3$^{rd}$ of January to the 31$^{th}$ of January. Thus, for each delivery hour we dispose of $n'= 87$ increments to estimate the volatility.

The  model parameters are specified in Table \ref{parameters}.

\begin{table}[h]
\centering
   \begin{tabular}{ |c|c||c|c|}
     \hline
     Parameter & Value & Parameter & Value \\ \hline
     $S_0$ & $40$ \euro{}/MWh &  $a$ & $1$  \euro{}/MWh$^{2}$  \\ \hline
     $\sigma^{S}$ & $10$ \euro{}/MWh.h$^{1/2}$ & $\lambda$ & $100$ \euro{}/MWh$^{2}$\\ \hline
     $\overline{X}_0$ & 0 MWh & $\lambda_0$ & $100$ \euro{}/MWh$^{2}$\\ \hline
      $\bar \sigma$ &  $73$ MWh/h$^{1/2}$ &  $\alpha$ & 0.3 \euro{}/h.MW$^{2}$ \\ \hline
     $\check X^{i}_0$ & 0 MWh & $\alpha_0$ &  0.3 \euro{}/h.MW$^{2}$ \\ \hline
       $\sigma^{X},\sigma^{0}$ & $73$ MWh/h$^{1/2}$  & $\beta$ & 0.1 \euro{}/MW$^{2}$  \\ \hline
     $N$ & 100 & $\beta_0$ & 0.1 \euro{}/MW$^{2}$  \\ \hline

   \end{tabular}
   \caption{Parameters of the model}
   \label{parameters}
 \end{table}

\paragraph{\textbf{Equilibrium price and market impact}}

In Figure \ref{Fig5}, we plot the major agent production forecast and the common production forecast (respectively the orange and blue solid lines) together with the equilibrium position of the major agent and the aggregate position of the minor agents given by Proposition \ref{major} and Proposition \ref{minorbis} (respectively the orange and blue dashed lines). For comparison, we also plot the aggregate position in the identical agent case (dotted green line). All trajectories have been computed with the same production forecasts, the same fundamental price, initial values, volatilities and parameters, specified in Table \ref{parameters}, except for the price impact coefficients of major and minor player, which differ according to model specification. In the Stackelberg game we chose ${a}^0 = {a} = 0.5$ \euro{}/MWh$^{2}$, and in the homogeneous case we kept ${a} = 1$ \euro{}/MWh$^{2}$ .

\begin{figure}[h]
      \centering
    \caption{Stackelberg game}
    \label{Fig5}
   \includegraphics[width = 0.60\textwidth]{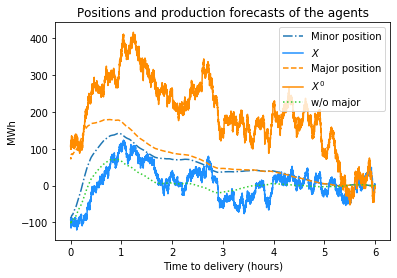}
\end{figure}

We observe that the strategy in the  setting of identical agents and the strategy of the minor player in the
Stackelberg setting converge to the same terminal value due to the terminal penalty. However, in the Stackelberg case, the minor agent position tends to follow the one of the  major player during the first part of the trading period. In the case of identical agents, the fluctuations are not as strong  since, contrary to the case when a major agent is present, the generic minor agent has no incentive to modify her trajectory to follow the leader. During the second half of the trading period, the minor agent position deviates further away from the one of the major agent to target the same terminal position as the mean field in the case of identical agents. We can argue that the strategy of the minor agent becomes more sensitive to the terminal constraint as we get closer to the delivery time: the weight of the terminal constraint in her strategy increases due to the decrease of the instantaneous trading cost.

\paragraph{\textbf{Volatility and price-forecast correlation}}

In this paragraph, we illustrate with simulations the effect of the presence of the major agent on the price characteristics such as the volatility and the correlation between the price and renewable infeed forecasts. The volatility was estimated from simulated price trajectories using a kernel-based non parametric estimator of the instantaneous volatility:
\begin{equation}\label{volestimation}
    \hat \sigma_t^2 = \frac{\sum_{i = 1}^n K_h(t_{i-1}-t)\Delta \tilde P^2_{t_{i-1}}}{\sum_{i = 1}^n K_h(t_{i-1}-t)(t_{i}-t_{i-1})},
\end{equation}
where $K(.)$ is the Epanechnikov kernel: $K(x) = \frac{3}{4} (1-x^2) \mathbbm{1}_{[-1,1]}(x)$
and $K_h(x)= \frac{1}{h}K(\frac{x}{h})$. The parameter $h$ was taken equal to $0.08$ hour ($\approx 5$ minutes).

For a fixed scenario of production forecasts for the minor and major players, drawn in Figure \ref{Fig5}, left graph, we estimated the volatility of the simulated market price for different values of the weights $a_0$ and $a$ assigned, respectively, to the major player and the mean field of minor players in the price impact function. We studied three different combinations of weights to illustrate the impact of the minor players and the major player in the game:  ${a}^0 = {a} = 0.5$ \euro{}/MWh$^{2}$, the impact of the major player and the minor players is the same; ${a}^0 = 0.9, {a} = 0.1$ \euro{}/MWh$^{2}$, the major player has a lot more impact than the minor players, and finally ${a}^0 = 0$ \euro{}/MWh$^{2}$, ${a} = 1$ \euro{}/MWh$^{2}$, equivalent to a market price without major player since she has no market impact in this case. These weights can be seen as the respective market shares held by the major agent and the minor players.

Figure \ref{Fig6}, left graph, shows the estimated volatility trajectories for the three different cases of market shares of the major agent averaged over 1000 simulations. 
We note that the volatility of the market price depends on the strength of impact of the major player: the greater $a
^0$, the higher the volatility. A possible explanation for this phenomenon is that stronger competition in the market (when the major agent is absent or has a small market share) reduces profit opportunities in the market and the agents therefore trade less actively.

For comparison, we also plot in Figure \ref{Fig6}, right graph, the volatility estimated from empirical intraday electricity price data using the same estimator \eqref{volestimation}. This graph is taken from \cite{feron2020price}. We see that the phenomenon of increasing volatility at the approach of the delivery date, clearly visible in the actual electricity markets, is well reproduced by our model.

\begin{figure}[h]
      \centering
      \caption{Left: volatility of simulated prices for different market shares of the major agent. Right: volatility for different delivery hours, estimated empirically  from EPEX  spot intraday market data of January 2017 for the Germany delivery zone.}
      \label{Fig6}
      \includegraphics[width =0.48\textwidth]{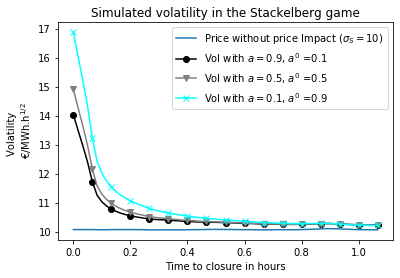}
\includegraphics[width =0.49\textwidth]{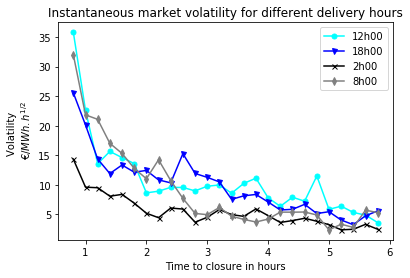}   %\hfill
\end{figure}

An important stylized feature of intraday market prices, observed empirically in \cite{kiesel2017econometric} and \cite{feron2020price} is the correlation between the price and the renewable production forecasts.
In Figure \ref{Fig7}, we plotted the correlation between the increments of the market price and the increments of the renewable production forecast of the major agent as function of time, in the market impact setting ${a}^0 = {a} = 0.5$ \euro{}/MWh$^{2}$; as well as the correlation between the price increments and the increments of the total aggregate forecast of both the major and minor players. 

The correlation is computed over 15-minutes increments using the following estimator:
$$\hat \rho_{t} = \frac{\sum_{k =1}^{N_{sim}} (\Delta Y^k_t - \overline{\Delta Y}_t)(\Delta P^k_t - \overline{\Delta P}_t)}{\sqrt{\sum_{k = 1}^{N_{sim}} (\Delta Y^k_t - \overline{\Delta Y}_t)^2\sum_{k = 1}^{N_{sim}}(\Delta P^k_t - \overline{\Delta P}_t)^2}},$$
with $N_{sim}$ the number of simulations (we considered $N_{sim} = 50000$) and where $\Delta Y^k_t $ and $\Delta P^k_t = P^{MF,k}_{t+dt}-P^{MF,k}_{t}$ are the increments of, respectively, the forecast process and the market price. 

\begin{figure}[h]
      \centering
      \caption{Correlation between the price increments and the major player renewable production increments v.s the correlation between the price increments and the total renewable production increments}
      \includegraphics[width = 0.48\textwidth]{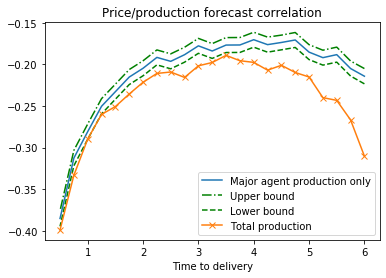}\label{Fig7}
\end{figure}

{For the sake of clarity we only draw the Monte Carlo confidence interval for the case of the  correlation between the major player production and the price considered on Figure \ref{Fig7}. A similar confidence interval was obtained for the case of total production correlation.
In Figure \ref{Fig7}, we observe that the correlation between the production forecast increments of the major agent and the price is lower in absolute value than the correlation between the total production forecast increments and the price. However, the gap between the correlations diminishes as we approach the delivery time.
}
\section*{Acknowledgements}The authors gratefully acknowledge financial support from the ANR
(project EcoREES ANR-19-CE05-0042) and from the
FIME Research Initiative.

%%%%%%%%%%%%%%%%%%%%%%%%%%%%%%%%%%%%%%%%%%

%\bibliographystyle{siam}
%\bibliography{references.bib}

%=====================================
% References, variant B: internal bibliography
%=====================================

% The following MDPI journals use author-date citation: Arts, Econometrics, Economies, Genealogy, Humanities, IJFS, JRFM, Laws, Religions, Risks, Social Sciences. For those journals, please follow the formatting guidelines on http://www.mdpi.com/authors/references
% To cite two works by the same author: \citeauthor{ref-journal-1a} (\citeyear{ref-journal-1a}, \citeyear{ref-journal-1b}). This produces: Whittaker (1967, 1975)
% To cite two works by the same author with specific pages: \citeauthor{ref-journal-3a} (\citeyear{ref-journal-3a}, p. 328; \citeyear{ref-journal-3b}, p.475). This produces: Wong (1999, p. 328; 2000, p. 475)
%% for journal Sci
%\reviewreports{\\
%Reviewer 1 comments and authors’ response\\
%Reviewer 2 comments and authors’ response\\
%Reviewer 3 comments and authors’ response
%}

%%%%%%%%%%%%%%%%%%%%%%%%%%%%%%%%%%%%%%%%%%
\end{document}